\newtheorem{theorem}{Theorem}
\newtheorem{lemma}{Lemma}
\newtheorem{definition}{Definition}
\newtheorem{corollary}{Corollary}
\begin{document}
\date{}

%
%

\title{Eccentricity of the nodes of OTIS-cube and Enhanced-OTIS-cube}

\author{Rajib K Das
\\
Department of Computer Science \& Engineering, University of Calcutta\\
West Bengal, India}

\maketitle

\begin{abstract}In this paper we have classified the nodes of OTIS-cube based on their
eccentricities. OTIS (optical transpose interconnection system) is
a large scale optoelectronic computer architecture,
proposed in \cite{KMKE92},
that benefit from both optical and electronic
technologies. We show that radius and diameter of OTIS-$Q_n$ is $n+1$ and
$2n+1$ respectively. We also show that average eccentricity of OTIS-cube
is $(3n/2+1)$.
 In \cite{D05},  a variant of OTIS-cube,  called Enhanced OTIS-cube (E-OTIS-$Q_n$) was proposed.
 E-OTIS-$Q_n$  is regular of degree $n+1$ and maximally fault-tolerant.
  In this paper we have
given a classification of the nodes of  E-OTIS cube and derived expressions
for the eccentricities of the nodes in each class. Based on these results we
show that radius and diameter of E-OTIS-$Q_n$ is $n+1$ and
$\lfloor {4n+4/3} \rfloor$ respectively. We have also
computed the average eccentricity of E-OTIS-$Q_n$ for values of $n$ upto
20.
\end{abstract}

\section{Introduction}

The optical transpose
interconnection system (OTIS), first introduced in
\cite{KMKE92},
was proposed for large scale parallel computer architectures in
which the processors are divided into groups. Each group of
processors is fabricated on one or more high density chips
or modules with in-built electronic interprocessor connections, whereas processors in different groups are interconnected via free space optics. When the distance between modules is more than
a few millimeters, the free space optical interconnects
offer power, speed, I/O bandwidth and crosstalk advantages over electronic
counterparts, and hence are suitable for {\it large scale systems} [8].  Since it has been shown in \cite{KMKE92} that both the bandwidth
and power consumption are minimized when the number of
processors in a group is equal to the number of groups, most of the topologies proposed so far, based on
OTIS,  follows this design guideline \cite{RS98}, \cite{SW97}, \cite{ZMPE00}.
In such an OTIS system, an intergroup link connects processor $p$ of
group $g$ to processor $g$ of group $p$. The
interconnection topology formed by the intra-group links are called here the  {\it factor} network. Depending
on the factor network, different OTIS-network topologies have been proposed so far, such as
OTIS-Mesh, OTIS-Cube, OTIS-star and so on.

A general study of OTIS-Networks based on any factor network topology,
is presented
in \cite{DA02}. In \cite{D04}, a number of results regarding the topological
properties of OTIS-$k$-ary $n$-cube interconnection networks have been
derived which shows the suitability of these topologies for
multiprocessor interconnection network.
In a more recent work \cite{CXP09}, the authors  contribute further results in this
direction by studying a general fault tolerance property of
OTIS networks. It is proved that an OTIS network is
maximally fault tolerant if its basis network is connected. They have also
proposed a corresponding method for constructing
parallel paths between its nodes.
Extensive research has been done to develop efficient algorithms for various applications on the OTIS architecture such as
  selection and sorting \cite{RS98},
matrix multiplication \cite{WS01}, image processing \cite{WS00}, BPC permutation generation \cite{SW97} etc.

OTIS-cube is a particular member of the general class of OTIS networks where
the factor network is a hypercube. An OTIS-cube $Q_n$ consists of $2^n$ groups
with each group having $2^n$ nodes. It is to be noted that in each group of an OTIS-cube
there is one node whose processor number is the same as the group number. So, there
is no optical link from such  a node. Hence the OTIS-cube (OTIS-$Q_n$) is not a regular topology, since
in this network some nodes are of degree $n+1$ and some nodes are of
degree $n$. In \cite {CXP09}, the authors raised the question whether the fault-tolerance
of OTIS structure can be further improved by pairwise connecting the nodes which
do not possess inter-cluster links.
The topology
 proposed in \cite{D05},
called the  enhanced OTIS-cube (E-OTIS-$Q_n$), does just that. There, it is shown
that if the Basis network is hypercube, the fault-tolerance
is improved by adding optical links to the nodes which are of degree $n$ in OTIS-$Q_n$.
The diameter of E-OTIS-$Q_n$ is also $\lfloor {4n+4\over 3} \rfloor$
\cite{D07},  less than the diameter
of OTIS-$Q_n$ by a constant factor.

The OTIS structure is an attractive option for multiprocessor
systems as it offers the benefits from both optical and electrical
technologies. E-OTIS-$Q_n$ retains OTIS-$Q_n$ as a subgraph
and thus has almost all the desirable properties of OTIS-cube
but  contains some additional optical links.
 The advantage gained by adding those extra links, namely,
uniform node degree, reduction in diameter by almost one-third, and
improved fault-tolerance, far outweigh the cost.
 These advantages
make E-OTIS-$Q_n$ a suitable architecture for multiprocessor
interconnection network.

We have quoted the parts of the paper \cite{D07} presenting the algorithm for
shortest path routing and the proof of its optimality,  as they
constitute the basis for our main contribution in this paper. i.e., finding
the eccentricities of the nodes of OTIS-$Q_n$ and E-OTIS-$Q_n$.
In recent times there are a number of research works published on several
graph properties like eccentricity, proximity, remoteness \cite{HP12},
\cite{MWZ12}.

The rest of the paper is organized as follows. In the next section
we present the definitions and basic properties. The next two
sections deal with routing algorithm and eccentricities  of
the nodes in
E-OTIS-$Q_n$ respectively. Section 5 concludes the paper.

\section{Definitions  and Preliminaries}

The binary hypercube $Q_n$ has $N=2^n$ nodes labeled from 0 to $2^n-1$.
Two nodes are connected by an edge if and only if, their labels, in
binary,  differ in exactly one bit position. $Q_n$ has degree
$n$ and diameter $n$. Given two nodes $u$ and $v$ in $Q_n$, we denote
by $H(u,v)$ the {\it Hamming
distance} between $u$ and $v$, i.e., the number of bit positions
in which $u$ and $v$ differ. $H(u,v)$ is also the length of the
shortest path between $u$ and $v$ in $Q_n$.

The OTIS-$Q_n$ is composed of $N=2^n$ node-disjoint subgraphs $Q_n^{(0)}$,
$Q_n^{(1)}$,  $\ldots Q_n^{(N-1)}$, called groups. Each of these groups is
isomorphic to a binary hypercube $Q_n$. A node $<\hspace{-.1cm}g,x\hspace{-.1cm}>$ in OTIS-$Q_n$
corresponds to a node of address $x$ in group $Q_n^{(g)}$. We refer to
$g$ as the group address of node $<\hspace{-.1cm}g,x\hspace{-.1cm}>$ and to $x$ as its processor
address.
\begin{definition} The OTIS-$Q_n$ network is an undirected graph
$(V,E)$ given by

$V=\{<\hspace{-.1cm}g,x\hspace{-.1cm}>|g,x\in Q_n\}$
and\\
$E =\{(<\hspace{-.1cm}g,x  \hspace{-.1cm}>,<\hspace{-.1cm}g,y  \hspace{-.1cm}>) | (x  ,y  )$ is an edge in $Q_n\} \cup
\{(<\hspace{-.1cm}g,x\hspace{-.1cm}>,<\hspace{-.1cm}x,g\hspace{-.1cm}>)|g\ne x$ in $Q_n$\}
\end{definition}

 An intra-group edge of the form $(<\hspace{-.1cm}g,x\hspace{-.1cm}>, <\hspace{-.1cm}g,y\hspace{-.1cm}>)$ corresponds
to an electrical link.
\begin{definition}
The electrical link $(<g,x>, <g,y>)$ is referred
to as link $i$ if $x$ and $y$ differ in the $i^{th}$ bit.
\end{definition}

 An inter-group link of the form $(<\hspace{-.1cm}g,x\hspace{-.1cm}>, <\hspace{-.1cm}x,g\hspace{-.1cm}>)$
corresponds to an optical link.

Some of the properties of OTIS-$Q_n$ are given below \cite{DA02}:
\begin{itemize}

\item  The size of OTIS-$Q_n$ is $2^{2n}$.

\item  The degree of
a node $<\hspace{-.1cm}g,x\hspace{-.1cm}>$ is $n+1$, if $g\ne x$ and
 $n$, if $g=x$.

\item  The distance
between two nodes $<\hspace{-.1cm}g,x\hspace{-.1cm}>$ and $<\hspace{-.1cm}h,y\hspace{-.1cm}>$, denoted as $d(g,x,h,y)$ is given
 as :

$d(g,x,h,y) = \cases{ H(x,y), &if $g=h$\cr
  {\rm min} \cases{H(g,h)+H(x,y)+2 \cr
   H(x,h)+H(g,y)+1},                               &otherwise}$

\item  The diameter of OTIS-$Q_n$ is $(2n+1)$.
\end{itemize}

\begin{definition} The enhanced OTIS-$Q_n$
is obtained from OTIS-$Q_n$ by connecting every node of the form $<\hspace{-.1cm}g,g\hspace{-.1cm}>$ to
the node $<\hspace{-.1cm}\bar g, \bar g\hspace{-.1cm}>$ by an optical link, where $\bar g$ is obtained by complementing
all the bits of $g$.
We refer to  the links of these form as E-links.
\end{definition}

\section{Shortest Path Routing}

The routing algorithms for OTIS-$Q_n$ and E-OTIS-$Q_n$  described
in this section are as given in \cite{D07}.

\subsection{Routing in OTIS-$Q_n$}
The routing algorithm for OTIS-$Q_n$ is taken from
\cite{DA02} but a  few observations are added.
 For a source node $<g,x>$ and destination node $<h,y>$, this
algorithm always gives a path of length $d(g,x,h,y)$ defined in section 2.

If $(g=h)$, then $d(g,x,h,y)=H(x,y)$.
Otherwise, $d(g,x,h,y)$ is minimum $\{l_1,l_2\}$, where
$l_1=H(x,h)+H(g,y)+1$ and
$l_2=H(g,h)+H(x,y)+2$.

The paths corresponding to $l_1$ and $l_2$ are referred to as
{\it path-1} and {\it path-2}
respectively.

The authors made  the following observations.
\\
{\bf Obs 1.} If $g=x$ or $h=y$, then $l_1 <l_2$\\
{\bf Obs 2.} If $h=x$, then $l_1<l_2$ \\
{\bf Obs 3.} If $l_2 <l_1$ then we must have $g\ne x$ and $h \ne x$.

Now  the algorithm {\bf Route} $(g,x,h,y)$ to route from $<g,x>$ to
$<h,y>$ in an OTIS-$Q_n$ is presented.

\begin{algorithm}
 \DontPrintSemicolon
 \SetKwInOut{Input}{input}\SetKwInOut{Output}{output}
 \Input{Nodes $<g,x>$ and $<h,y>$}
 \Output{Path from $<g,x>$ to $<h,y>$}
 \Begin{
  \If {$g=h$}
    { traverse group $g$ to $<g,y>$}
  \Else {
    \If {$H(g,h)+H(x,y)+2 < H(x,h)+H(g,y)+1$}
       {{\bf route2} $(g,x,h,y)$}
     \Else {{\bf route1} $(g,x,h,y)$}
     }
    }
{\bf route1} $(g,x,h,y)$ : $<g,x> \rightarrow <g,h> \rightarrow <h,g> \rightarrow <h,y>$
\\
{\bf route2} $(g,x,h,y)$ : $<g,x> \rightarrow <x,g> \rightarrow <x,h> \rightarrow <h,x> \rightarrow <h,y>$

\caption{Routing in OTIS-cube}
\end{algorithm}

Here {\bf route1} and {\bf route2}
 correspond to {\it path-1} and {\it path-2} respectively.
{\bf route2} is valid only if $g\ne x$ and $h \ne x$. Note
that when  path 2 is followed,  $g\ne x$ and $h\ne x$ by obs. 3.

\subsection{Routing in E-OTIS-$Q_n$}

The E-links which are added to an OTIS-$Q_n$ to get E-OTIS-$Q_n$ are
of the form $(<p,p>, <\bar p, \bar p>)$. It is shown that any shortest
path in E-OTIS-$Q_n$ can contain a link of this form at most once.

\begin{lemma}\label{only1} Any shortest path from a node $<g,x>$ to a node
$<h,y>$ in E-OTIS-$Q_n$, can contain at most one E-link.
 \end{lemma}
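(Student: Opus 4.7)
The plan is to argue by contradiction: suppose some shortest path $P$ from $\langle g,x\rangle$ to $\langle h,y\rangle$ uses at least two E-link edges. Among all pairs of E-link traversals in $P$, pick two that are \emph{consecutive} along $P$, meaning the sub-walk of $P$ strictly between them contains no E-link. That sub-walk then lies entirely in the OTIS-$Q_n$ subgraph, so its length is bounded below by the distance formula $d(\cdot)$ recalled in Section~2. The goal is to turn this lower bound into an OTIS-$Q_n$ shortcut that saves at least two edges and thereby contradicts the shortness of $P$.

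Label the two chosen E-links as $\{\langle p,p\rangle,\langle \bar p,\bar p\rangle\}$ and $\{\langle q,q\rangle,\langle \bar q,\bar q\rangle\}$. After renaming (swapping $p$ with $\bar p$ and/or $q$ with $\bar q$ if necessary) I may assume $P$ traverses them in the directions $\langle p,p\rangle\to\langle \bar p,\bar p\rangle$ and $\langle q,q\rangle\to\langle \bar q,\bar q\rangle$. The segment of $P$ joining the first entry to the last exit then has the shape $\langle p,p\rangle\to\langle \bar p,\bar p\rangle\to\cdots\to\langle q,q\rangle\to\langle \bar q,\bar q\rangle$, with the middle portion $\sigma$ from $\langle \bar p,\bar p\rangle$ to $\langle q,q\rangle$ a walk in OTIS-$Q_n$. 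If $\bar p=q$, then $\bar q=p$ and the entire segment starts and ends at the same node $\langle p,p\rangle$, so deleting it yields a strictly shorter walk from $\langle g,x\rangle$ to $\langle h,y\rangle$, contradicting the shortness of $P$. Otherwise $\bar p\ne q$ (equivalently $p\ne\bar q$), and the OTIS distance formula forces $|\sigma|\ge 2H(\bar p,q)+1$, making the segment have length at least $2H(\bar p,q)+3$; the same formula however gives $d(p,p,\bar q,\bar q)=2H(p,\bar q)+1=2H(\bar p,q)+1$, so replacing the segment by a shortest OTIS-$Q_n$ path from $\langle p,p\rangle$ to $\langle \bar q,\bar q\rangle$ (which uses no E-links) saves at least two edges, again a contradiction.

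The main obstacle, as I see it, is purely bookkeeping: justifying the two ``without loss of generality'' choices --- namely the orientation labelling $p$ vs.\ $\bar p$ at each E-link endpoint, and the reduction to a pair of \emph{consecutive} E-link traversals inside $P$. The arithmetic hinges only on the identity $H(p,\bar q)=H(\bar p,q)$, which is immediate since simultaneously complementing both arguments preserves Hamming distance. No induction on the total number of E-links used by $P$ is required, because a single consecutive pair already suffices to produce the contradiction.
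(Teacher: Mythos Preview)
Your proof is correct and follows essentially the same contradiction argument as the paper: isolate two consecutive E-link traversals, lower-bound the length of the E-link-free OTIS segment between them by $2H(\bar p,q)+1$, and exhibit a strictly shorter OTIS-only replacement. The only cosmetic difference is that the paper shortcuts the prefix $\langle g,x\rangle\to\langle \bar q,\bar q\rangle$ (invoking the triangle inequality $H(g,p)+H(p,\bar q)\ge H(g,\bar q)$), whereas you more cleanly shortcut just the segment $\langle p,p\rangle\to\langle \bar q,\bar q\rangle$ and handle the degenerate case $\bar p=q$ explicitly.
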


\begin{proof} The proof is by contradiction.
Let us suppose a shortest path from $<g,x>$ to $<h,y>$
contains more than one E-link. Suppose the first two
occurrences of such E-links are of the form
  $(<p,p>, <\bar p, \bar p>)$ and
and  $(<q,q>, <\bar q, \bar q>$.

Suppose the path is

{\bf A} : $<g,x> \rightarrow <p,p> \rightarrow < \bar p, \bar p> \rightarrow
<q,q> \rightarrow <\bar q, \bar q>  \rightarrow <h,y>$

The length of
the partial path from $<g,x>$ to $<p,p>$ is $H(g,p)+H(x,p)+1$ , if $g\ne p$
and $H(x,p)$, if $g=p$.
Similarly length of the partial path from $<\bar p, \bar p>$ to
$<q,q>$ is $H(\bar p, q)+H(\bar p, q)+1$.
Hence length of the partial path from $<g,x>$ to $<\bar q, \bar q>$
is greater than or equal to  $H(g,p)+H(p,x)+2H(\bar p,q)+2$

The node  $<\bar q, \bar q>$ can be reached
 from $<g,x>$ without using any E-link
and length of that path is $H(g,\bar q)+H(x,\bar q)+1$, if $g \ne \bar q$
and $H(x, \bar q)$, if $g = \bar q$.

Now $H(g,p)+H(p,x)+2H(\bar p, q)+2$\\
$= H(g,p)+H(p,\bar q) + H(x,p) + H(p,\bar q) +2$
\hfill   [ as $H(\bar p, q) = H(p, \bar q)$ ]\\
$\ge H(g,\bar q)+ H(x, \bar q) + 2$\\
$>H(g,\bar q) + H(x,\bar q) +1$

Hence, path {\bf A} cannot be a shortest path, which is a contradiction.
\end{proof}

\subsubsection{Minimizing the number of electrical links}
The shortest path between $<g,x>$ and
$<h,y>$ either won't have any E-link or will have only one E-link.
In the later case the path will be of the form

{\bf E} : $<g,x> \rightarrow
<b,b> \rightarrow <\bar b, \bar b> \rightarrow <h,y>$

In the later part of this section  whenever a path with E-link is
considered, it is  assumed
that the E-link is of the form $(<b,b>, <\bar b,\bar b>)$.

\begin{definition}
Let $C_i(g,x,h,y)$ be the number of times electrical link $i$ is used
in the path {\bf E}. It is called  the
 cost associated with bit $i$.  We can write $C_i(g,x,h,y)$  as $C_i
(g,x,b,b)+C_i(\bar b,\bar b,h,y)$.
The $i^{th}$ bit of $g$ is denoted as  $g[i]$.
\end{definition}

The following lemmas are regarding $C_i(g,x,b,b)$ and $C_i(\bar b,
\bar b,h,y)$ in the path {\bf E}.

\begin{lemma}\label{l1}
 If $g[i] \ne x[i]$ then $C_i(g,x,b,b)=1$.
\end{lemma}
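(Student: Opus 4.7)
The plan is a straightforward case analysis on the structure of the shortest OTIS-$Q_n$ sub-path from $<g,x>$ to $<b,b>$, as described by the algorithm {\bf Route} in Section 3.1. By inspection of {\bf route1} and {\bf route2}, every such shortest path decomposes into at most two purely electrical (intra-group) segments separated by one or two optical hops; and the electrical segments always consist of bit-flipping moves between explicit endpoints, so the number of times bit $i$ is used along a segment is simply the indicator that its endpoints differ in coordinate $i$.

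I would organise the argument by the three possible shapes of the partial path $<g,x> \to <b,b>$. First, if $g=b$, the path is entirely electrical within group $g$, traversing from $x$ to $b$, so bit $i$ is used exactly $[x[i]\ne b[i]]$ times; since $g=b$, the hypothesis $g[i]\ne x[i]$ yields $b[i]\ne x[i]$, giving $C_i=1$. Second, if the path uses {\bf route1}, i.e.\ $<g,x>\to <g,b>\to <b,g>\to <b,b>$, then the electrical segments are $<g,x>\to <g,b>$ and $<b,g>\to <b,b>$, and bit $i$ is used $[x[i]\ne b[i]] + [g[i]\ne b[i]]$ times. Third, if the path uses {\bf route2}, i.e.\ $<g,x>\to <x,g>\to <x,b>\to <b,x>\to <b,b>$, then the electrical segments are $<x,g>\to <x,b>$ and $<b,x>\to <b,b>$, and bit $i$ is used $[g[i]\ne b[i]] + [x[i]\ne b[i]]$ times, which is the same expression as in {\bf route1}.

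In both nontrivial cases the count reduces to $[x[i]\ne b[i]] + [g[i]\ne b[i]]$. Under the hypothesis $g[i]\ne x[i]$, the bit $b[i]$ agrees with exactly one of $g[i], x[i]$ and disagrees with the other, so exactly one of the two indicators is $1$ and the other is $0$; hence $C_i(g,x,b,b)=1$.

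The only potential subtlety I anticipate is the degenerate situations $b=x$ or $b=g$, where the path shape collapses (for instance, when $b=x$ the {\bf route2} template passes through $<x,x>$ and effectively reduces to a shorter form). I would handle these as small sub-cases: each degeneracy still leaves electrical segments whose endpoint bit patterns make the indicator sum equal to $1$ under $g[i]\ne x[i]$. Note also that $g=x$ is excluded by the hypothesis, so the optical hop from $<g,x>$ is always well defined whenever it is used.
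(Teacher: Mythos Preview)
Your argument is correct and matches the paper's: both reduce to the count $[x[i]\ne b[i]] + [g[i]\ne b[i]]$ along the electrical segments and observe that exactly one term equals $1$ when $g[i]\ne x[i]$. Your case analysis of {\bf route2} is superfluous, since by Obs.~1 the destination $\langle b,b\rangle$ (having $h=y$) always selects {\bf route1}; the paper relies on this implicitly and gives only the corresponding two-line argument.
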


\begin{proof}If
$x[i] \ne b[i]$, $g[i]=b[i]$ and if $x[i]=b[i]$, $g[i] \ne b[i]$. That is,
The $i^{th}$ bit needs to be changed   either for moving from $<g,x>$ to $<g,b>$
or for moving from $<b,g>$ to $<b,b>$.
\end{proof}

\begin{lemma}\label{l2}
If $g[i]=x[i]$ then $C_i(g,x,b,b)=\cases{0, &if  $g[i]=b[i]$\cr
2, &otherwise}$
\end{lemma}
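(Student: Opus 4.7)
The plan is to mirror the case analysis used in the proof of Lemma~\ref{l1}: trace electrical link $i$ through each intra-group leg of the partial path from $\langle g,x\rangle$ to $\langle b,b\rangle$ inside the shortest path $\mathbf{E}$. Since the endpoint $\langle b,b\rangle$ has equal group and processor addresses, Observation~1 forces this partial path to be of path-1 type whenever $g \ne b$, namely
\[
\langle g,x\rangle \;\to\; \langle g,b\rangle \;\to\; \langle b,g\rangle \;\to\; \langle b,b\rangle,
\]
where the middle hop is an optical link and contributes nothing to $C_i$. If instead $g=b$, the partial path is the single intra-group walk $\langle b,x\rangle \to \langle b,b\rangle$.

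First I would record the obvious fact that, in path-1, electrical link $i$ is used in the first leg iff $x[i]\ne b[i]$ and in the third leg iff $g[i]\ne b[i]$, so
\[
C_i(g,x,b,b) \;=\; [x[i]\ne b[i]] \;+\; [g[i]\ne b[i]].
\]
Now assume $g[i]=x[i]$ and split on whether $g[i]=b[i]$. If $g[i]=b[i]$, then also $x[i]=b[i]$, so both indicators vanish and $C_i(g,x,b,b)=0$. If $g[i]\ne b[i]$, then also $x[i]\ne b[i]$, so both indicators are $1$ and $C_i(g,x,b,b)=2$.

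The degenerate case $g=b$ is absorbed without trouble: then $g[i]=b[i]$ automatically, placing us in the first sub-case, and the single intra-group hop flips bit $i$ exactly $[x[i]\ne b[i]]=0$ times, again giving $C_i(g,x,b,b)=0$. I do not expect any substantive obstacle; the whole argument is the same bit-by-bit bookkeeping that powered Lemma~\ref{l1}, and the only thing to be careful about is invoking Observation~1 to justify that path-1 is the relevant partial path, so that the counts over the two intra-group legs really do capture $C_i(g,x,b,b)$.
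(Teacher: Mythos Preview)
Your proof is correct and follows essentially the same approach as the paper: both trace the path-1 partial route $\langle g,x\rangle \to \langle g,b\rangle \to \langle b,g\rangle \to \langle b,b\rangle$ and count uses of electrical link $i$ on the two intra-group legs. Your version is simply more explicit, invoking Observation~1 to justify that path-1 is the relevant shortest partial path and handling the degenerate case $g=b$ separately, whereas the paper leaves these points implicit.
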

\begin{proof}If $g[i]=b[i]$, the $i^{th}$ bit  need not
be changed  at all.
If $g[i]\ne b[i]$, the $i^{th}$ bit needs to be changed twice, once for
moving from $<g,x>$ to $<g,b>$ and once for moving from $<b,g>$ to $<b,b>$.
\end{proof}

\begin{lemma}\label{l3} If $h[i] \ne y[i]$ then $C_i(\bar b, \bar b,h,y)=1$.
\end{lemma}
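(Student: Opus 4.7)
The plan is to mirror the argument of Lemma \ref{l1}, applied to the tail segment of path \textbf{E} rather than its head segment. The sub-path from $<\bar b, \bar b>$ to $<h,y>$ in path \textbf{E} has the form
$$<\bar b, \bar b> \rightarrow <\bar b, h> \rightarrow <h, \bar b> \rightarrow <h, y>,$$
where the middle transition is an optical link (the case $h = \bar b$ is trivial, giving cost $0$ for every bit and is incompatible with the hypothesis $h[i] \neq y[i]$ whenever $h=\bar b = y$, so we may assume $h \neq \bar b$). Electrical link $i$ can only be used in the first electrical segment (changing the processor address from $\bar b$ to $h$) or in the last electrical segment (changing the processor address from $\bar b$ to $y$).

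Next I would do a two-case analysis on the value of $y[i]$, just as Lemma \ref{l1} splits on $x[i]$. If $y[i] = \bar b[i]$, then because $h[i] \neq y[i]$ we must have $h[i] \neq \bar b[i]$; hence bit $i$ must be flipped exactly once while routing from $<\bar b,\bar b>$ to $<\bar b,h>$, and not at all while routing from $<h,\bar b>$ to $<h,y>$. If $y[i] \neq \bar b[i]$, then $h[i] = \bar b[i]$, and symmetrically bit $i$ is flipped exactly once in the second electrical sub-segment and not at all in the first. Either way, link $i$ contributes $1$ to the total, giving $C_i(\bar b,\bar b,h,y) = 1$.

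The step that requires care — rather than being a genuine obstacle — is correctly identifying which processor address is being modified in each electrical sub-segment after the optical swap. Once that bookkeeping is handled, the argument is a direct mirror of Lemma \ref{l1} with the substitutions $g \mapsto h$, $x \mapsto y$, $b \mapsto \bar b$.
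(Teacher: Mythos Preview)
Your approach is essentially identical to the paper's: the paper simply writes ``the proof is similar to that of Lemma~\ref{l1}'', and you have spelled out that analogy in full, with the substitutions $g\mapsto \bar b$, $x\mapsto \bar b$ on the source side and $h,y$ on the target side, doing the two-case split on whether $y[i]$ agrees with $\bar b[i]$.

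One small remark: your parenthetical handling of the degenerate case $h=\bar b$ is slightly off. In that case the sub-path stays entirely within group $\bar b$ and is just $<\bar b,\bar b>\rightarrow <\bar b,y>$; the cost for bit $i$ is then $1$ precisely when $\bar b[i]\ne y[i]$, i.e.\ when $h[i]\ne y[i]$, so the lemma still holds directly rather than being ``incompatible'' or yielding cost $0$. This does not affect the correctness of your main argument.
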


\begin{proof} The proof is similar to that of lemma \ref{l1}.
\end{proof}

\begin{lemma}\label{l4}
If $h[i]=y[i]$ then $C_i(\bar b, \bar b,h,y)
=\cases{ 0, &if  $h[i]= \bar b[i]$ \cr
2, &otherwise}$
\end{lemma}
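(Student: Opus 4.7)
The plan is to mirror the argument of Lemma \ref{l2}, since the sub-path from $<\bar b,\bar b>$ to $<h,y>$ inside path {\bf E} plays exactly the same role as the sub-path from $<g,x>$ to $<b,b>$ analyzed in Lemmas \ref{l1}--\ref{l2}, with $\bar b$ in place of $b$. Following the same routing convention used implicitly in those proofs, the relevant sub-path decomposes as
\[
<\bar b,\bar b> \ \rightarrow\ <\bar b,h>\ \rightarrow\ <h,\bar b>\ \rightarrow\ <h,y>,
\]
and electrical link $i$ is charged once for each of the two intra-group legs $<\bar b,\bar b>\to<\bar b,h>$ and $<h,\bar b>\to<h,y>$ on which the $i$-th bit of the processor address actually flips.

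Assume the hypothesis $h[i]=y[i]$ and split on whether $h[i]=\bar b[i]$. In the first case, $\bar b[i]=h[i]=y[i]$, so the $i$-th bit of the processor coordinate is constant along the whole sub-path ($\bar b\to h\to \bar b\to y$), and neither leg uses link $i$; hence $C_i(\bar b,\bar b,h,y)=0$. In the second case, $h[i]\ne\bar b[i]$, so $y[i]=h[i]\ne\bar b[i]$ as well. Then link $i$ is used exactly once when moving from $<\bar b,\bar b>$ to $<\bar b,h>$ (to flip $\bar b[i]$ into $h[i]$), and exactly once more when moving from $<h,\bar b>$ to $<h,y>$ (to flip $\bar b[i]$ into $y[i]$), yielding $C_i(\bar b,\bar b,h,y)=2$.

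There is no real obstacle here beyond bookkeeping: the statement is the forward counterpart of Lemma \ref{l2} under the substitution $(g,x,b)\mapsto(\bar b,\bar b,\text{target pair }(h,y))$, and the only thing to be slightly careful about is citing the same sub-path structure used in Lemma \ref{l1} (which the author appealed to when writing ``The proof is similar to that of Lemma \ref{l1}'') so that the charging of link $i$ to the two legs is justified in exactly the same way.
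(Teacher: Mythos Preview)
Your proof is correct and follows exactly the approach the paper intends: the paper's own proof is simply ``The proof is similar to that of lemma~\ref{l2},'' and you have spelled out precisely that analogy, tracing the two intra-group legs of the sub-path $<\bar b,\bar b>\to<\bar b,h>\to<h,\bar b>\to<h,y>$ and counting how many times link $i$ is used under the hypothesis $h[i]=y[i]$.
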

\begin{proof}The proof is similar to that of lemma \ref{l2}
\end{proof}

The lemmas \ref{l1} to  \ref{l4} are applied and
 $C_i(g,x,h,y)$ is computed  for each of the following cases for $b[i]=g[i]$
and $b[i] \ne g[i]$.
Whenever  $C_i(g,x,h,y)$ is expressed as sum of two terms, the first
term corresponds to $C_i(g,x,b,b)$ and the second term corresponds
to $C_i(\bar b, \bar b, h,y)$.
\\
{\bf Case 1.} $g[i] \ne x[i]$ and $h[i]=y[i]=g[i]$.
\\
Here $C_i(g,x,b,b)=1$ by lemma \ref{l1}.\\
 By lemma \ref{l2},
$C_i(\bar b, \bar b, h,y)=\cases{0,  &if
$b[i] \ne g[i]$ \cr
2,  &otherwise}$.\\
 Hence $C_i(g,x,h,y)=\cases{ 1, &if
$b[i] \ne g[i]$ \cr
  3,  &if $b[i]=g[i]$}$.
\\
{\bf Case 2.} $g[i] \ne x[i]$ and $h[i] =y[i] \ne g[i]$
\\
If $b[i] \ne g[i]$, $C_i(g,x,h,y)=1+2=3$. If
 $b[i]=g[i]$,  $C_i(g,x,h,y)=1+0=1$
\\
{\bf Case 3.} $g[i] \ne x[i]$ and $h[i] \ne y[i]$
\\
Here lemma \ref{l1} and \ref{l3} apply. So,  $C_i(g,x,h,y)=1+1=2$.
\\
{\bf Case 4.} $g[i]=x[i]$ and $h[i]=y[i]=g[i]$
\\
If $b[i]=g[i]$, then $C_i(g,x,h,y)=0+2=2$ by lemma \ref{l2} and
\ref{l4}. If $b[i] \ne g[i]$,
then $C_i(g,x,h,y)=2+0=2$.
\\
{\bf Case 5.} $g[i]=x[i]$ and $h[i]=y[i] \ne g[i]$
\\
Here, lemma \ref{l2} and \ref{l4}  apply. If $b[i]=g[i]$, then $C_i(g,x,h,y)=0+0=0$.
If $b[i]\ne g[i]$, then $C_i(g,x,h,y)=2+2=4$.
\\
{\bf Case 6.}  $g[i]=x[i]$ and $h[i] \ne y[i]$
\\
If $b[i]=g[i]$, then $C_i(g,x,h,y)=0+1=1$. \\
If $b[i] \ne g[i]$, then $C_i(g,x,h,y)=2+1=3$.

 {\bf Finding the optimal $b$:}
The length of the path (B) is equal to the number of
optical links in the path + $\sum C_i(g,x,h,y)$.
 So, for finding the shortest path,  we have to consider how to minimize
$C_i$ as well as the
number of optical links.
 Now, if $b=g$ there is no optical link in the path from
$<g,x>$ to $<b,b>$. But if $b \ne g$, then there is one optical link
in that part of the path. Summarizing case 1 to 6, we find that it
is advantageous to have $b[i] \ne g[i]$ only in case 1. We define
the set $S_g= \{i| g[i]\ne x[i], h[i]=y[i]=g[i]\} $ corresponding
to case 1. If $|S_g|=0$, we can make $b=g$ and avoid having an
optical link in the partial path $<g,x>$ to $<b,b>$. If $|S_g|\ne 0$
we cannot avoid optical link in this part of the path. In that case
we try to avoid optical link in the part of the path from $<\bar b, \bar b>$
to $<h,y>$. We define $S_h= \{i | h[i] \ne y[i], g[i]=x[i]=h[i]\}$.
By similar logic $S_h$ is the set of bit positions $i$ where
it is advantageous to have $\bar b[i] \ne h[i]$. If $|S_h|=0$, we can set
$\bar b=h$ and avoid optical link in the partial path from
$<\bar b, \bar b>$.

\subsubsection{routing algorithm}
Now  the algorithm {\bf RTE} to obtain the
length of the shortest path involving an E-link is presented.  {\bf RTE} returns
two values : the length of the path and the value $b$.

 Let $\oplus$ denote the bitwise exclusive-OR (XOR) operation,
$\odot$ denote the bitwise equivalence operation (XNOR),
 and \& denote the bitwise
AND operation. Let $t$ and $u$ be two $n$-bit binary numbers.

\begin{algorithm}
 \DontPrintSemicolon
 \SetKwInOut{Input}{input}\SetKwInOut{Output}{output}
 \Input{Nodes $<g,x>$ and $<h,y>$}
 \Output{Length of the shortest path involving an E-link, and $b$ where $(<b,b>,<\bar b, \bar b>)$ is the E-link}
  \Begin{
  $t \leftarrow (g\oplus x) \&(g \odot h) \& (h \odot y)$.\;
$u \leftarrow (h \oplus y) \&( g \odot h) \& (g \odot x)$.\;
\If {$t =0$} {$b \leftarrow g$}
\Else {
\If {($u\ne 0$)}   {$b  \leftarrow g \oplus t$}
\Else
 {$b = \bar h$}
 }
return $( d(g,x,b,b)+1+ d(\bar b, \bar b, h,y), b)$
}
\caption{Algorithm RTE}
\end{algorithm}

The terms $d(g,x,b,b)$ and $d(\bar b, \bar b, h,y)$ correspond to the length of the paths $<g,x> \rightarrow <b,b>$
and $<b,b> \rightarrow <h,y>$ respectively
and 1 is due to the use of an E-link $<b,b>$ to $<\bar b, \bar b>$.

\begin{lemma} The algorithm {\bf RTE} always gives the shortest path
involving an E-link. \end{lemma}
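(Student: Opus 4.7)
The plan is to express the length of path \textbf{E} as a sum of bit-wise contributions plus a small optical-link overhead, and then to minimize this expression case by case over $b$. Using the Section~2 formula for $d$ together with the observation that for a destination $<b,b>$ (or source $<\bar b,\bar b>$) the route called \emph{path-1} is always exactly one shorter than \emph{path-2}, one obtains
\[
L(b):=d(g,x,b,b)+1+d(\bar b,\bar b,h,y)=\sum_{i=1}^n C_i^{(b)}+\varepsilon_1(b)+1+\varepsilon_2(b),
\]
where $\varepsilon_1(b)=1$ iff $g\ne b$ and $\varepsilon_2(b)=1$ iff $\bar b\ne h$ count the optical hops used in the two partial paths outside the middle E-link, and $C_i^{(b)}$ is the bit-$i$ electrical cost tabulated in Lemmas~\ref{l1}--\ref{l4} and Cases~1--6.

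Next I would read off the bit-wise optimum of $\sum_i C_i^{(b)}$ directly from that case analysis. Flipping $b[i]$ away from $g[i]$ strictly helps only in Case~1, i.e.\ on the set $S_g$ (the support of $t$), saving $2$ per flipped bit; a wrong flip costs $+2$ in Cases~2 and~6, $+4$ in Case~5, and is free in Cases~3 and~4. By the analogous reading, the positions where $\bar b[i]\ne h[i]$ is strictly advantageous are exactly $S_h$ (the support of $u$).

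With these ingredients in hand I would work through the three branches of \textbf{RTE}. If $t=0$, no bit yields a $C_i^{(b)}$-saving, so the only potential gain of moving $b$ away from $g$ is to save the second optical link by forcing $\bar b=h$; but that same move introduces the first optical link, making the trade a wash, so $b=g$ is optimal. If $t\ne 0$ and $u=0$, then $b=\bar h$ flips exactly the positions with $g[i]=h[i]$: this set contains all of $S_g$ (earning the $2|S_g|$ saving), avoids Cases~2, 5 and 6-outside-$S_h$ (which all require $g[i]\ne h[i]$), and contains no $S_h$ bit since $S_h=\emptyset$; combined with $\varepsilon_2(\bar h)=0$ this gives $L(\bar h)=L(g)-2|S_g|$, which beats $L(g\oplus t)$ by exactly one optical link. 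Finally, if $t\ne 0$ and $u\ne 0$, then $b=g\oplus t$ flips precisely $S_g$, yielding $L(g\oplus t)=L(g)-2|S_g|+1$, whereas $b=\bar h$ additionally flips every bit of $S_h$, contributing $+2|S_h|\ge 2$ to $\sum_i C_i^{(b)}$, which more than cancels the single optical link saved by $\varepsilon_2=0$, so $b=g\oplus t$ wins.

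The main obstacle is the combinatorial bookkeeping needed to rule out \emph{every} other candidate $b$: because Cases~3 and~4 are neutral in $\sum_i C_i^{(b)}$, there is a whole family of $b$'s sharing the same bit-wise cost but differing optical counts. The cleanest way to close this gap is the identity $\varepsilon_2(b)=0\iff b=\bar h$, which shows that any attempt to drive the second optical link to zero forces $b$ to flip every position with $g[i]=h[i]$, automatically incurring the $+2|S_h|$ penalty in the $t\ne 0,u\ne 0$ branch and coinciding with the algorithm's choice in the $t\ne 0,u=0$ branch. Once this is noted, the optimality of all three returns of \textbf{RTE} follows by direct comparison of the bit-wise and optical contributions.
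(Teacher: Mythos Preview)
Your approach is essentially the same as the paper's: decompose the length of the E-link path into the bit-wise electrical cost $\sum_i C_i$ (via Lemmas~\ref{l1}--\ref{l4} and Cases~1--6) plus the optical overhead, then minimize first over the bits and second over the optical hops, treating the three branches of \textbf{RTE} separately. The paper's own proof is considerably terser---it simply asserts that flipping the bits of $S_g$ is advantageous, that the remaining choice is whether an optical link can be saved on the $\langle \bar b,\bar b\rangle\to\langle h,y\rangle$ side, and that this is governed by $S_h$---without spelling out the comparison against \emph{every} candidate $b$ as you do in your final paragraph. Your explicit bookkeeping (the observation that wrong flips cost $+2$ or $+4$, that Cases~3--4 are neutral, and that $\varepsilon_2(b)=0\iff b=\bar h$) makes the optimality claim genuinely watertight, whereas the paper leaves that verification to the reader. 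One small wrinkle: in the $t\ne 0,\,u=0$ branch it can happen that $g\oplus t=\bar h$ (when Cases~3-with-$g[i]=h[i]$ and~4 are both empty), so ``beats $L(g\oplus t)$ by exactly one'' should read ``by at most one''; this does not affect correctness.
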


\begin{proof} By lemma \ref {only1}, a shortest path involving
an E-link, cannot have more than one E-links. The algorithm RTE
uses only one  E-link $<b,b>$ to $<\bar b, \bar b>$. Thus, we
only need to show that choice of $b$ is an optimal one.
Clearly, $S_g =\{ i| t[i]=1\}$ and $S_h=\{i |u[i]=1\}$.
If $t=0$, then $|S_g|=0$, and we can make $b=g$, avoiding
optical link while going from $<g,x>$ to $<b,b>$.
If $t\ne 0$, then $|S_g|>0$. For
every $i \in S_g$, if we make $b[i] \ne g[i]$ the cost associated
with this bit is 1, compared to cost 3 when $b[i]$ is equal to
$g[i]$. So, it is advantageous to have $b[i] \ne g[i]$ for every
$i \in S_g$. That is achieved by making $b = g \oplus t$. Now
we cannot avoid optical link in route $(g,x,b,b)$. In this case,
we check if we can avoid optical link in route $(\bar b, \bar b, h,y)$,
which will be possible if $\bar b=h$ i.e., $|S_h|=0$. If $u=0$,
i.e., $|S_h|=0$, we set $b =\bar h$ avoiding optical link in
the partial path  $<\bar b, \bar b>$  to $<h,y>$. Hence, the choice of $b$ is optimal
considering both the cost $C_i$ for each bit position $i$ and
use of optical links.
\end{proof}

Now  the algorithm for routing in E-OTIS-$Q_n$ which is
called {\bf Eroute} is presented.

\begin{algorithm}
 \DontPrintSemicolon
 \SetKwInOut{Input}{input}\SetKwInOut{Output}{output}
 \Input{Nodes $<g,x>$ and $<h,y>$}
 \Output{Shortest path Routing in E-OTIS-cube}
\Begin{
$L_1 \leftarrow d(g,x,h,y)$.\;
$(L_2,b)  \leftarrow {\bf RTE} (g,x,h,y)$. \;
\If {$L_1 \le L_2$} { {\bf route} $(g,x,h,y)$}
\Else  {
	{\bf route} $(g,x,b,b)$.\;
	follow E-link to $< \bar b , \bar b>$.\;
	{\bf route} $(\bar b, \bar b, h,y)$.\;
}
}
\caption{Algorithm {\bf Eroute} }
\end{algorithm}

\begin{theorem} The algorithm {\bf Eroute} always routes
by the shortest path. \end{theorem}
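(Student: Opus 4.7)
The plan is to show that the length of the path produced by \textbf{Eroute} equals the shortest-path distance in E-OTIS-$Q_n$, by partitioning all paths between $\langle g,x\rangle$ and $\langle h,y\rangle$ according to how many E-links they contain and showing that \textbf{Eroute} finds the best representative of each class.

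First I would invoke Lemma~\ref{only1} to restrict attention to two cases: a candidate shortest path either uses no E-link or uses exactly one E-link. In the first case, the path lies entirely in the subgraph OTIS-$Q_n$ of E-OTIS-$Q_n$, so its length is at least $d(g,x,h,y)=L_1$, and this bound is attained by the subroutine \textbf{Route} (whose optimality in OTIS-$Q_n$ is stated in Section~3.1). In the second case, the path has the form $\langle g,x\rangle \rightsquigarrow \langle b,b\rangle \to \langle \bar b,\bar b\rangle \rightsquigarrow \langle h,y\rangle$ for some $b$, and by the preceding lemma on the correctness of \textbf{RTE}, the shortest such path over all choices of $b$ has length exactly $L_2$, realised by the $b$ returned by \textbf{RTE}.

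Combining the two cases, the length of a shortest path in E-OTIS-$Q_n$ from $\langle g,x\rangle$ to $\langle h,y\rangle$ is $\min(L_1,L_2)$. The algorithm \textbf{Eroute} compares $L_1$ and $L_2$ and, in each branch, delegates the actual path construction to \textbf{Route} (applied either to the original pair, or successively to $(g,x,b,b)$ and $(\bar b,\bar b,h,y)$, glued by the E-link $(\langle b,b\rangle,\langle \bar b,\bar b\rangle)$); the total length produced in the else-branch is exactly $d(g,x,b,b)+1+d(\bar b,\bar b,h,y)=L_2$. Hence the path returned has length $\min(L_1,L_2)$, which matches the lower bound.

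The only subtle step is the second case: one must be sure that the optimum over all one-E-link paths really equals the value $L_2$ computed by \textbf{RTE}, rather than something smaller that the cost-per-bit analysis of Cases~1--6 might have missed. This has already been discharged by the preceding lemma, so the remaining work is mostly bookkeeping: verify that the two calls to \textbf{Route} in the else-branch are invoked on legitimate inputs, and that concatenating their outputs with the E-link $(\langle b,b\rangle, \langle \bar b,\bar b\rangle)$ yields a genuine walk of the claimed length in E-OTIS-$Q_n$.
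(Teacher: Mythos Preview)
Your proposal is correct and follows essentially the same approach as the paper: split into the zero-E-link and one-E-link cases via Lemma~\ref{only1}, invoke the optimality of \textbf{Route} for the former and the optimality lemma for \textbf{RTE} for the latter, and conclude that \textbf{Eroute} realises $\min(L_1,L_2)$. The paper's own proof is in fact much terser than yours and does not spell out the bookkeeping you mention in your last paragraph.
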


\begin{proof} The shortest path between two nodes $<g,x>$ and $<h,y>$
either does not use an E-link or uses a single E-link. In the
former case, the length of the shortest path is $L_1$ and routing
is by algorithm {\bf route}. In the later case, the length of
the shortest path is $L_2$ as found by {\bf RTE}.
\end{proof}

 The number of
optical links in the computed shortest path is at most 3.

\section {Eccentricity of nodes of E-OTIS-$Q_n$ and its Diameter}

We consider a source $<g,x>$, a destination $<h,y>$ and a set of
three paths between them. We omit the case when $g=h$, since in that
case the distance between $<g,x>$ and $<h,y>$ is at most $n$.
\\
path 1 : $<g,x> \rightarrow <g,h> \rightarrow <h,g> \rightarrow <h,y>$\\
path 2 : $<g,x> \rightarrow <x,g> \rightarrow <x,h> \rightarrow <h,x>
\rightarrow <h,y>$\\
path3 : $<g,x> \rightarrow <b,b> \rightarrow <\bar b,\bar b> \rightarrow
<h,y>$ where choice of $b$ is optimal (as found by {\bf RTE}).

Here, path 1 uses one optical link, path 2 uses two
optical links, path 3 uses one E-link and at most two other
optical links.
Let $l_1, l_2$ and $l_3$ be the lengths of path 1, path 2, and
path 3 respectively.  The shortest path between $<g,x>$ and $<h,y>$ is
of length $l=  min(l_1,l_2,l_3)$. Let $A_i, B_i$ and $C_i$ be the
number of times link $i$ is used in path 1, path 2 and path 3 respectively.
We find the value of $A_i, B_i$ and $C_i$ for each of the following
cases.
\\
\\
{\bf Case 1.} $g[i] \ne x[i]$, $h[i]=y[i]=g[i]$ : Consider path
1. Since $x[i] \ne h[i]$ and
$g[i]=y[i]$, we need to use link $i$ while going from
$<g,x>$ to $<g,h>$ but not while going from $<h,g>$ to
$<h,y>$. Hence, $A_i=1$. For path 2, we do not need
to use link $i$ while going from $<x,g>$ to $<x,h>$ but
use it while going from $<h,x>$ to $<h,y>$. Thus, $B_i=1$.
For path 3, as explained in  section 3,  $C_i=1$.
\\
{\bf Case 2.} $g[i] \ne x[i]$, $h[i]=y[i] \ne g[i]$ : $A_i=1$, $B_i=1$,
$C_i=1$
\\
{\bf Case 3.} $g[i]=x[i]=h[i]$, $h[i] \ne y[i]$ :
$A_i=1$, $B_i=1$, $C_i=1$
\\
{\bf Case 4.} $g[i]=x[i] \ne h[i]$, $h[i] \ne y[i]$ :
$A_i=1$, $B_i=1$, $C_i=1$
\\
{\bf Case 5.} $g[i]=x[i]$, $h[i]=y[i]=g[i]$ :
$A_i=0$, $B_i=0$, $C_i=2$
\\
{\bf Case 6.} $g[i]=x[i]$, $h[i]=y[i] \ne g[i]$ :
$A_i=2$, $B_i=2$, $C_i=0$
\\
{\bf Case 7.} $g[i] \ne x[i]$, $h[i] \ne y[i]$, $h[i]=g[i]$ :
$A_i=2$, $B_i=0$, $C_i=2$
\\
{\bf Case 8.} $g[i] \ne x[i]$, $h[i] \ne y[i]$, $h[i] \ne g[i]$ :
$A_i=0$, $B_i=2$, $C_i=2$

Define a set of bit positions
 $S_1 = \{i, g[i]\ne x[i], h[i]=y[i]=g[i]\}$. Clearly, $S_1$ corresponds
to the case 1. Similarly we define $S_2$ to $S_8$, where $S_i$ corresponds
to case $i$. Since, each bit will belong to exactly one of $S_i$'s,
  we have $\sum_i |S_i| =n$. Also $H(g,x)=|S_1|+|S_2|+|S_7|+|S_8|$. Note that $S_1$ is identical
to
the set $S_g$ defined in previous section. \\
Let $T=|S_1|+|S_2|+|S_3|+|S_4|$.
Now we state the following lemmas.

\begin{lemma} \label{sl1}
$T+2|S_5|+2|S_7|+2|S_8|+1 \le l_3 \le T+2|S_5|+2|S_7|+2|S_8|+3$
\end{lemma}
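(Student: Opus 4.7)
The plan is to write $l_3 = \sum_i C_i + \kappa$, where $\kappa$ is the total count of non-electrical links on path 3 (the single E-link, plus any optical links), and then evaluate the two summands separately. Both the lower and upper bound in the lemma will drop out by simple arithmetic.

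For $\sum_i C_i$, I would use the case-by-case tabulation of $C_i$ displayed just above the lemma. Since the eight sets $S_1,\ldots,S_8$ partition the $n$ bit positions, and Cases 1--4 contribute $C_i=1$, Cases 5, 7, 8 contribute $C_i=2$, and Case 6 contributes $C_i=0$,
$$\sum_i C_i \;=\; |S_1|+|S_2|+|S_3|+|S_4| + 2|S_5| + 2|S_7| + 2|S_8| \;=\; T + 2|S_5| + 2|S_7| + 2|S_8|.$$
For $\kappa$, I would argue structurally: path 3 always uses exactly one E-link, uses one optical link on its first leg $<g,x>\rightarrow <b,b>$ iff $g\ne b$ (otherwise this leg is a pure intra-group walk in $Q_n^{(g)}$), and symmetrically uses one optical link on its last leg iff $h\ne \bar b$. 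Hence $\kappa \in \{1,2,3\}$, and adding the two pieces gives precisely the inequalities claimed.

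The main obstacle is checking that the $C_i$-values asserted in Cases 1--8 are actually attained by the $b$ produced by \textbf{RTE}, not merely attainable in principle. This needs a case split on which branch of \textbf{RTE} fires: (i) when $t=0$ we have $S_1=\emptyset$ and $b=g$, so every other case hits its minimum; (ii) when $t\ne 0$ and $u\ne 0$, the assignment $b=g\oplus t$ flips exactly the bits of $S_1$, giving $C_i=1$ on $S_1$ while leaving $b[i]=g[i]$ on the rest; (iii) when $t\ne 0$ and $u=0$, we have $S_3=\emptyset$ (since $S_h$ coincides with $S_3$), and for $b=\bar h$ one checks bit by bit that $b[i]\ne g[i]$ holds on $S_1$ while $b[i]=g[i]$ holds on $S_2\cup S_4\cup S_6$. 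Once this bookkeeping is done, each branch gives the same value $T+2|S_5|+2|S_7|+2|S_8|$ for $\sum_i C_i$, and the lemma follows.
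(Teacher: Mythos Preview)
Your proposal is correct and follows essentially the same approach as the paper: decompose $l_3$ into $\sum_i C_i$ plus the number of optical links, read off $\sum_i C_i = T + 2|S_5| + 2|S_7| + 2|S_8|$ from the case tabulation, and bound the optical-link count between $1$ and $3$. The paper's own proof is a four-line version of this and simply asserts the $C_i$ values from the tabulation; your additional case split on the three branches of \textbf{RTE} to verify that the optimal $b$ actually realizes the tabulated $C_i$ in every case is extra rigor the paper omits, but it is accurate (in particular your identification $S_h = S_3$ and the bit-by-bit check for $b=\bar h$ are correct).
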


\begin{proof}$l_3=\sum_i C_i +$ number of optical links in the path.\\
Since, $C_i=1$ if $i \in S_1 \cup S_2\cup S_3\cup S_4$, \\
$C_i=2$ if $i \in S_5 \cup S_7 \cup  S_8$, \\
and $C_i=0$ if $i \in S_6$.\\
Also, the path uses at least one and at most three optical links. Hence,
the proof follows.
\end{proof}

\begin{corollary}\label{actual} If $S_1=0$, and $S_3+S_5+S_7>0$, the length of
the shortest path involving an E-link is $l_3=T+2|S_5|+2|S_7|+2|S_8|$.
\end{corollary}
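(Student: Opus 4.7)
The plan is to use algorithm \textbf{RTE} to pin down the value of $b$ under the hypotheses, tally $\sum_i C_i$ via the Section~3 case analysis, and then account for the optical links on path~$\mathbf{E}$.

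First, I would observe that $S_1$ coincides with the set $S_g$ from the proof of the preceding \textbf{RTE}-lemma, since both encode $\{i: g[i]\ne x[i],\ h[i]=y[i]=g[i]\}$. The hypothesis $S_1=0$ therefore forces $t=0$ inside algorithm \textbf{RTE}, and the algorithm outputs $b=g$. The E-link used on path~$\mathbf{E}$ is $(<g,g>,<\bar g,\bar g>)$, and the first leg $<g,x>\to<g,g>$ is an intra-group walk that uses no optical hop.

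Second, with $b[i]=g[i]$ at every $i$, the per-bit costs from Cases~1--6 of Section~3 specialise to $C_i = 3, 1, 1, 1, 2, 0, 2, 2$ on $S_1,\dots,S_8$ respectively (Section~4 Cases~3--4 come from Section~3 Case~6, and Section~4 Cases~7--8 from Section~3 Case~3). The hypothesis $S_1=0$ kills the $S_1$-contribution, and the rest collapses to
\[ \sum_i C_i \;=\; |S_2|+|S_3|+|S_4|+2|S_5|+2|S_7|+2|S_8| \;=\; T+2|S_5|+2|S_7|+2|S_8|, \]
which already matches the formula in the corollary.

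Third, the hypothesis $S_3+S_5+S_7>0$ would be used to characterise the second leg $<\bar g,\bar g>\to<h,y>$. Since $S_1\cup S_3\cup S_5\cup S_7$ is exactly the set of positions where $g[i]=h[i]$, the two hypotheses jointly force $\bar g\ne h$, so the second leg is a genuine cross-group hop. By obs.~1 (its source has group equal to processor), it is optimally routed by path~1.

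The main obstacle is the final bookkeeping: to conclude that $l_3$ equals the stated expression, I must account for the optical-link count on path~$\mathbf{E}$ and reconcile it against the sandwich $T+2|S_5|+2|S_7|+2|S_8|+1 \le l_3 \le T+2|S_5|+2|S_7|+2|S_8|+3$ from Lemma~\ref{sl1}. I would carry this out by tracing the full path $<g,x>\to<g,g>\to<\bar g,\bar g>\to<\bar g,h>\to<h,\bar g>\to<h,y>$ and matching the two optical hops traversed (the E-link plus the single path-1 optical hop) against the additive offset the corollary commits to, verifying that the hypotheses $S_1=0$ and $S_3+S_5+S_7>0$ are exactly the conditions picking out the intended value within that length-$3$ window.
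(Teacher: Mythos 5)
Your approach is the same as the paper's: use $S_1=S_g=0$ to force $t=0$ in {\bf RTE} and hence $b=g$ (so the first leg $<g,x>\rightarrow<g,g>$ is purely electrical), use $S_3+S_5+S_7>0$ to conclude $\bar g\ne h$ (your justification, that $S_1\cup S_3\cup S_5\cup S_7$ is exactly the set of positions with $g[i]=h[i]$, is in fact more explicit than the paper, which merely asserts the implication), and then add the optical-link count to $\sum_i C_i$. Your per-bit tally $\sum_i C_i = T+2|S_5|+2|S_7|+2|S_8|$ under $b=g$ and $S_1=0$ is correct.

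The gap is that you defer exactly the step that carries the content, and that step cannot be completed in the form you describe. Finish the count: the first leg contributes $0$ optical links, the E-link contributes $1$, and the leg $<\bar g,\bar g>\rightarrow<h,y>$ contributes exactly $1$ (since $\bar g\ne h$ at least one optical link is unavoidable, and by Obs.~1 this leg is routed by path-1, which uses exactly one). Hence $l_3=\sum_i C_i+2=T+2|S_5|+2|S_7|+2|S_8|+2$. Your stated plan --- to verify that the hypotheses pick out ``the intended value within that length-$3$ window'' of Lemma~\ref{sl1} --- fails for the value as printed, because the corollary's displayed formula omits the $+2$ and therefore lies \emph{below} the window's lower bound $\sum_i C_i+1$. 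This is a typo in the statement rather than a flaw in your reasoning: every subsequent invocation of Corollary~\ref{actual} (the term $2|S_5|+2|S_7|+2|S_8|+2$ in Lemma~\ref{ec1}, and the ``$+2$'' appearing in each $l_3$ computation of Lemma~\ref{ec2}) uses the value with the $+2$. The tension you noticed between ``the cost sum already matches the formula'' and ``two optical hops must still be added'' is real, and a complete proof must resolve it by committing to the corrected value $T+2|S_5|+2|S_7|+2|S_8|+2$.
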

\begin{proof}
If $S_1=0$, we can set $b=g$ (discussion subsection 3.2.1) and do not need optical link while going from
$<g,x>$ to $<g,g>$. Also $S_3+S_5+S_7>0$ implies that $\bar g \ne h$. Hence we need one optical link in the path
from $<\bar g, \bar g>$ to $<h,y>$.
\end{proof}

\begin{lemma}\label{sl2} $l_1 = T + 2|S_6|+2 |S_7| +1$ \end{lemma}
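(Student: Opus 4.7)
The strategy is to express $l_1$ as $\sum_i A_i$ plus the number of optical links used by path 1. Since path 1 is $<g,x> \rightarrow <g,h> \rightarrow <h,g> \rightarrow <h,y>$, it uses exactly one optical link (the intergroup edge $(<g,h>, <h,g>)$), and its total electrical length is $\sum_i A_i$ by the definition of $A_i$ as the number of times link $i$ is used in path 1. Hence $l_1 = \sum_i A_i + 1$, and it remains only to evaluate this sum.

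The remaining step is pure bookkeeping against the eight-case enumeration that immediately precedes the lemma. Reading off the recorded $A_i$ values: $A_i = 1$ for $i \in S_1 \cup S_2 \cup S_3 \cup S_4$, $A_i = 2$ for $i \in S_6 \cup S_7$, and $A_i = 0$ for $i \in S_5 \cup S_8$. Using the abbreviation $T = |S_1|+|S_2|+|S_3|+|S_4|$, we obtain $\sum_i A_i = T + 2|S_6| + 2|S_7|$, so that $l_1 = T + 2|S_6| + 2|S_7| + 1$, which is exactly the claim.

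There is no substantive obstacle here; the real content was already absorbed when the $A_i$ entries for the eight cases were written down. If I wanted to be self-contained I would spot-check a couple of cases, for instance case 7 (where $g[i] = h[i] \ne x[i]$ and $h[i] \ne y[i]$ force both the hop from $<g,x>$ to $<g,h>$ and the hop from $<h,g>$ to $<h,y>$ to flip bit $i$, giving $A_i = 2$) and case 8 (where $g[i] \ne h[i]$ combined with $g[i] \ne x[i]$ and $h[i] \ne y[i]$ forces $x[i] = h[i]$ and $y[i] = g[i]$, so neither intra-group hop touches bit $i$, giving $A_i = 0$). These checks are brief truth-table verifications and present no difficulty, so the lemma reduces entirely to the summation above.
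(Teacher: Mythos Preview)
Your proposal is correct. The paper actually states this lemma without proof, treating it as immediate from the eight-case table of $A_i$ values; your argument is exactly the intended computation, mirroring the proof given for Lemma~\ref{sl1} (sum the per-bit costs and add the count of optical links), so there is nothing to compare.
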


\begin{lemma}\label{sl3} $l_2 = T + 2|S_6|+ 2|S_8|+2$ \end{lemma}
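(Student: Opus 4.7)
The plan is to compute $l_2$ directly from the structure of path 2, which is $<g,x> \rightarrow <x,g> \rightarrow <x,h> \rightarrow <h,x> \rightarrow <h,y>$. Exactly two of the four transitions in this path are optical (the first, $<g,x>\to<x,g>$, and the third, $<x,h>\to<h,x>$), while the remaining two are electrical walks inside the groups $Q_n^{(x)}$ and $Q_n^{(h)}$ respectively. Since the lengths of these electrical walks are the Hamming distances $H(g,h)$ and $H(x,y)$, and since each bit $i$ contributes a $1$ to one of these Hamming distances exactly when the corresponding endpoints differ in bit $i$, the total length is $l_2 = 2 + \sum_i B_i$, where $B_i$ records the total number of electrical link-$i$ traversals along path 2.

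Next I would read off each $B_i$ from the case analysis already listed in the text. Inspecting the eight cases:
\begin{itemize}
\item For $i \in S_1 \cup S_2 \cup S_3 \cup S_4$, exactly one of $\{g[i]\ne h[i],\, x[i]\ne y[i]\}$ holds, so link $i$ is used once, giving $B_i=1$.
\item For $i \in S_6 \cup S_8$, both $g[i]\ne h[i]$ and $x[i]\ne y[i]$ hold, so $B_i=2$.
\item For $i \in S_5 \cup S_7$, neither holds, so $B_i=0$.
\end{itemize}
Summing over all $n$ bits yields $\sum_i B_i = |S_1|+|S_2|+|S_3|+|S_4| + 2|S_6|+2|S_8| = T + 2|S_6|+2|S_8|$. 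Adding the two optical links gives $l_2 = T + 2|S_6| + 2|S_8| + 2$, as claimed.

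There is no real obstacle here beyond routine bookkeeping; the only spot meriting a double-check is verifying the $B_i$ entries in Cases 6 and 8, where both Hamming distances $H(g,h)$ and $H(x,y)$ pick up a contribution in bit $i$, so bit $i$ must be flipped once in each of the two electrical segments, accounting for the factor of $2$.
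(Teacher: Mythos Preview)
Your proof is correct and follows exactly the approach implicit in the paper: the paper states this lemma without proof, leaving it as an immediate consequence of the $B_i$ values tabulated in Cases~1--8 together with the two optical links on path~2. Your write-up simply makes explicit the summation $l_2 = 2 + \sum_i B_i = 2 + T + 2|S_6| + 2|S_8|$ that the paper leaves to the reader.
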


\subsection{Eccentricity of nodes in OTIS-cube and Enhanced OTIS-cube}
A node in OTIS-$Q_n$ or E-OTIS-$Q_n$ is represented as $<g,x>$ where $g$ is
the group number, and $x$ is the node number within the group. We classify the
nodes based on the value of $H(g,x)$. All the nodes belong to one of the
$n+1$ classes corresponding to the value $H(g,x)=0,1, \ldots , n$. We show that
nodes belonging to the same class have the same eccentricity. We show
that for OTIS-$Q_n$, the eccentricity is maximum for
node with $H(g,x)=0$,
then it decrease
with increase in value of $H(g,x)$ and is minimum for $H(g,x)=n$.
For E-OTIS-$Q_n$,
the eccentricity is minimum for $H(g,x)=0$,
increases with increase in $H(g,x)$  upto $\lfloor 2n/3 \rfloor$,
and then again decreases and is minimum for $H(g,x)=n$.

The following lemmas (lemma \ref{ggbar} and lemma \ref{gx})  hold for
OTIS-$Q_n$ as well
as E-OTIS-$Q_n$.

\begin{lemma}\label{ggbar} A node $<g, \bar g>$ has eccentricity at
most $(n+1)$.
\end{lemma}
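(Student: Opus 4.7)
The plan is to fix an arbitrary destination $<h,y>$ and show $d(g,\bar g, h,y) \le n+1$. I will argue the bound first inside OTIS-$Q_n$, and then observe that the eccentricity can only decrease when we add the E-links, so the same bound carries over to E-OTIS-$Q_n$.

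First I would dispose of the easy case $g=h$, where the distance is just $H(\bar g, y) \le n$. For $g\ne h$, the OTIS distance formula gives $d(g,\bar g,h,y) = \min(l_1,l_2)$ with
\[
l_1 = H(\bar g,h) + H(g,y) + 1, \qquad l_2 = H(g,h) + H(\bar g, y) + 2.
\]
The key observation is that $x=\bar g$ is the \emph{antipode} of $g$ in $Q_n$, so $H(\bar g,h) = n - H(g,h)$ and $H(\bar g,y)= n - H(g,y)$. Substituting these identities,
\[
l_1 + l_2 \;=\; \bigl(n - H(g,h) + H(g,y) + 1\bigr) + \bigl(H(g,h) + n - H(g,y) + 2\bigr) \;=\; 2n+3.
\]
Since $l_1, l_2$ are nonnegative integers summing to $2n+3$, at least one of them is at most $\lfloor (2n+3)/2 \rfloor = n+1$, which gives the claimed bound in OTIS-$Q_n$.

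Finally, since E-OTIS-$Q_n$ is obtained from OTIS-$Q_n$ by \emph{adding} edges, every path in OTIS-$Q_n$ remains a valid path in E-OTIS-$Q_n$, so distances in E-OTIS-$Q_n$ are no larger than those in OTIS-$Q_n$. Hence $d_{\mathrm{E}}(<g,\bar g>, <h,y>) \le n+1$ as well, and since this holds for every $<h,y>$, the eccentricity of $<g,\bar g>$ is at most $n+1$ in both networks. There is no real obstacle here beyond spotting the antipodal cancellation $H(g,\cdot)+H(\bar g,\cdot)=n$; once that is noticed, $l_1+l_2$ collapses to a constant and the integer averaging argument finishes the proof.
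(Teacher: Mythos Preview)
Your proof is correct and follows essentially the same route as the paper: exhibit the two standard OTIS paths of lengths $l_1$ and $l_2$, use the antipodal identity $H(\bar g,\cdot)=n-H(g,\cdot)$ to get $l_1+l_2=2n+3$, and take the integer average. Your version is in fact slightly tidier than the paper's, since you explicitly dispose of the case $g=h$ and spell out why the bound transfers from OTIS-$Q_n$ to E-OTIS-$Q_n$ (by edge addition), whereas the paper leaves both of these implicit.
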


\begin{proof}We consider two paths between
$<g,\bar g>$ and $<h,y>$. Let $l_i$ denote the length of path $i$.\\
\\
path 1: $<g,\bar g> \rightarrow  <g,h> \rightarrow <h,g> \rightarrow <h,y>$ :\\
 $l_1=H(\bar g,h)+1+H(g,y)$\\
path 2: $<g,\bar g> \rightarrow < \bar g,  g> \rightarrow <\bar g, h>
\rightarrow <h, \bar g> \rightarrow <h,y>$ :\\
$l_2= 1+ H(g, h) + 1+ H (\bar g, y)$

Let $l_{min}= {\rm minimum} (l_1,l_2)$. Then $l_{min} \le (l_1+l_2)/2$.
Substituting $H(\bar g, h)=n-H(g,h)$, we get $l_{min} \le (2n+3)/2$. Since
$l_{min}$ must be an integer,  $l_{min} \le \lfloor (2n+3)/2 \rfloor = n+1$.
\end{proof}

\begin{lemma}\label{gx} A node $<g,x>$,  has
eccentricity at most
$(2n+1-H(g,x))$.
\end{lemma}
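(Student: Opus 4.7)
The plan is to bound the distance from $<g,x>$ to an arbitrary node $<h,y>$ by averaging the lengths of the two canonical paths (path~1 and path~2) that already appear in the preceding case analysis. Because E-OTIS-$Q_n$ contains OTIS-$Q_n$ as a spanning subgraph, distances in E-OTIS are at most the corresponding distances in OTIS, so it suffices to establish the inequality in OTIS-$Q_n$. The case $g=h$ is immediate: the distance equals $H(x,y) \le n$, which is at most $2n+1-H(g,x)$ for every admissible value of $H(g,x)$. In what follows I therefore assume $g \ne h$, so that path~1 and path~2 are both valid.

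Under this assumption, Lemmas~\ref{sl2} and~\ref{sl3} give $l_1+l_2 = 2T + 4|S_6| + 2|S_7| + 2|S_8| + 3$, where $T = |S_1|+|S_2|+|S_3|+|S_4|$. Noting that $H(g,x) = |S_1|+|S_2|+|S_7|+|S_8|$ (these are exactly the bit positions where $g$ and $x$ differ, namely cases 1, 2, 7, 8), I would add $2H(g,x)$ to both sides to obtain
$l_1 + l_2 + 2H(g,x) = 4|S_1|+4|S_2|+2|S_3|+2|S_4|+4|S_6|+4|S_7|+4|S_8|+3$.
Every $|S_i|$ (including $|S_5|$, which contributes $0$) carries a coefficient at most $4$, and since the sets $S_1,\ldots,S_8$ partition the $n$ bit positions, the right-hand side is bounded by $4n+3$.

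Dividing by $2$ and using $\min(l_1,l_2) \le (l_1+l_2)/2$ yields $\min(l_1,l_2) + H(g,x) \le 2n + 3/2$. Because the left-hand side is an integer while the bound is a half-integer, this strengthens automatically to $\min(l_1,l_2) + H(g,x) \le 2n+1$, which is precisely the claimed eccentricity estimate. I do not expect any genuine obstacle here; the argument is essentially bookkeeping, and the only point deserving a second look is the per-class coefficient check in the sum $l_1+l_2+2H(g,x)$, which follows directly from the $S_i$ tabulation already carried out in the section.
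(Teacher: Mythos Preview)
Your argument is correct but proceeds differently from the paper. The paper's proof is a one-line triangle-inequality step through the node $\langle g,\bar g\rangle$: since $\langle g,x\rangle$ reaches $\langle g,\bar g\rangle$ in $n-H(g,x)$ hypercube moves, and Lemma~\ref{ggbar} bounds the eccentricity of $\langle g,\bar g\rangle$ by $n+1$, the total is at most $2n+1-H(g,x)$. You instead average $l_1$ and $l_2$ directly via the $S_i$ partition, which is precisely the averaging trick the paper uses \emph{inside} the proof of Lemma~\ref{ggbar}, but carried out for a general $\langle g,x\rangle$ rather than only for the special case $x=\bar g$. Your route is thus self-contained (it does not invoke Lemma~\ref{ggbar}) at the cost of a slightly longer bookkeeping computation; the paper's route is shorter but leans on the preceding lemma.

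One small imprecision worth flagging: assuming $g\ne h$ does not by itself make path~2 a genuine path, since the optical hops in path~2 also require $g\ne x$ and $h\ne x$. This does not damage your argument, however, because whenever path~2 degenerates the paper's Observations~1--3 give $l_1<l_2$, so $\min(l_1,l_2)=l_1$ is still realized by the always-valid path~1.
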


\begin{proof}
Since, we can
reach $<g,\bar g>$ from $<g,x>$ in $n-H(g,x)$ steps, applying
lemma \ref{ggbar} we have a path  from $<g,x>$ to $<h,y>$ of length
$(n-H(g,x)+ n+1)=2n+1-H(g,x)$.  \end{proof}

\begin{lemma}\label{eco} In an OTIS-$Q_n$, the eccentricity of a node
$<g,x>$ is equal to $2n+1-H(g,x)$.
\end{lemma}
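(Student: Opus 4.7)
The upper bound half is already available: Lemma \ref{gx} shows that every node $\langle g,x\rangle$ has eccentricity at most $2n+1-H(g,x)$. So the plan is to prove the matching lower bound by exhibiting an explicit node $\langle h,y\rangle$ whose distance from $\langle g,x\rangle$ realizes the value $2n+1-H(g,x)$.

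My candidate witness is $\langle h,y\rangle = \langle \bar g, \bar g\rangle$, chosen to maximize Hamming distances from both coordinates of the source. Since $g\neq\bar g$ in $Q_n$ (for $n\ge 1$), the formula in Section~2 applies and the distance equals $\min(l_1,l_2)$ with $l_1=H(x,\bar g)+H(g,\bar g)+1$ and $l_2=H(g,\bar g)+H(x,\bar g)+2$. Writing $k=H(g,x)$ and using $H(x,\bar g)=n-k$ and $H(g,\bar g)=n$, one gets $l_1=2n+1-k$ and $l_2=2n+2-k$, so the distance is exactly $2n+1-k$. Combined with Lemma~\ref{gx}, this pins the eccentricity at $2n+1-H(g,x)$.

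The steps, in order, are: (i) invoke Lemma~\ref{gx} to get the upper bound; (ii) specify $\langle\bar g,\bar g\rangle$ as the witness and check $g\neq \bar g$ so that the two-term formula for $d(g,x,h,y)$ applies rather than the same-group expression $H(x,y)$; (iii) substitute the Hamming distances $H(x,\bar g)=n-H(g,x)$ and $H(g,\bar g)=n$ into $l_1$ and $l_2$ and take the minimum; (iv) conclude that the two bounds coincide. There is no substantive obstacle here — the only thing to be careful about is verifying that the witness lies in a different group (so that the optical-link formula applies) and that the distance function in Section~2 actually gives the true shortest-path length (which is how it is stated there), so no hidden shorter route can undercut the $l_1$ value.
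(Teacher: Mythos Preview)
Your argument is correct. The upper bound is Lemma~\ref{gx}, and your witness $\langle \bar g,\bar g\rangle$ does the job: with $k=H(g,x)$ one has $l_1=H(x,\bar g)+H(g,\bar g)+1=(n-k)+n+1=2n+1-k$ and $l_2=2n+2-k$, so the distance is exactly $2n+1-k$.

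Your route is genuinely different from the paper's, and in fact cleaner. The paper proves the lower bound through the $S_1,\ldots,S_8$ bit-class framework (Lemmas~\ref{sl2} and~\ref{sl3}), splitting into cases according to the parity of $A=H(g,x)$ and choosing a witness $q$ by prescribing $|S_6|=n-A$ and $|S_7|,|S_8|$ near $A/2$. That machinery is needed later for E-OTIS-$Q_n$, where a third path length $l_3$ enters and the interplay among $|S_5|,|S_6|,|S_7|,|S_8|$ matters; here in plain OTIS-$Q_n$ it is overkill. Your single explicit witness $\langle\bar g,\bar g\rangle$ (which in their notation has $|S_2|=A$, $|S_6|=n-A$) avoids the parity split and appeals only to the basic distance formula from Section~2. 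The paper's approach has the advantage of reusing the same template as the E-OTIS analysis; yours has the advantage of being direct and self-contained for this lemma.
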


\begin{proof}Consider a node $p=<g,x>$ with $H(g,x)=A$.
 From lemma \ref{gx}, the eccentricity of this node  is
less than or equal to $(2n+1-A)$.

We consider two cases.

{\bf Case 1:} $A$ is even. Let $q=<h,y>$ be another node such that
for the pair $(p, q)$, $|S_6|=n-A$, $|S_7|=A/2$ and $|S_8|=A/2$.
Then by
lemma \ref{sl2}, and \ref{sl3},  we have

$l_1=2|S_6|+2|S_7|+1=2(n-A)+2(A/2)+1=2n-A+1$

$l_2=2|S_6|+2|S_8|+2=2(n-A)+2(A/2)+2=2n-A+2$

Here, $l_1<l_2$, and
hence, the eccentricity of
node $p$ is $(2n+1-A)$.

{\bf Case 2:} $A$ is odd. Let $q$ be such that $|S_7|=(A+1)/2$, $|S_8|=(A-1)/2$ and
$|S_6|=n-A$. Proceeding as before, $l_1=2n-A+2$, and  $l_2=2n-A+1$.

Here, $l_2<l_1$ and hence, eccentricity of node $p$ is
$(2n+1-A)$.  \end{proof}

\begin{lemma}\label{ec1} In an E-OTIS-$Q_n$, for $H(g,x)>\lfloor 2n/3\rfloor$, the eccentricity of a node $<g,x>$
is equal to $(2n+1-H(g,x))$.
\end{lemma}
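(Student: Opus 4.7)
The plan is to pair the generic upper bound $\mathrm{ecc}(<g,x>) \le 2n+1-H(g,x)$ with a matching lower bound obtained by reusing the witness destination from the OTIS argument. The upper bound is free: the proof of Lemma \ref{gx} routes through $<g,\bar g>$ and then applies Lemma \ref{ggbar}, whose paths 1 and 2 use only ordinary optical and electronic links, so the same bound holds in E-OTIS-$Q_n$. All the work therefore lies in exhibiting a destination $q=<h,y>$ such that the shortest distance from $<g,x>$ to $q$ in E-OTIS-$Q_n$ is still $2n+1-A$, where $A=H(g,x)$.

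For the lower bound, I would reuse the destination constructed in the proof of Lemma \ref{eco}: split cases on the parity of $A$, setting $|S_6|=n-A$ and distributing $A$ between $|S_7|$ and $|S_8|$ as evenly as possible, with all other $|S_i|=0$. By Lemmas \ref{sl2} and \ref{sl3}, this gives $l_1,l_2 \ge 2n+1-A$ exactly as in the OTIS case (one of them meeting the bound and the other one above). In E-OTIS-$Q_n$ there is a third candidate, path 3, and by the correctness theorem for \textbf{Eroute}, the true distance equals $\min(l_1,l_2,l_3)$. So I must check $l_3 \ge 2n+1-A$ for the chosen witness. With $T=|S_1|+|S_2|+|S_3|+|S_4|=0$ and $|S_5|=0$, Lemma \ref{sl1} immediately yields $l_3 \ge 2|S_7|+2|S_8|+1 = 2A+1$. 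The inequality $2A+1 \ge 2n+1-A$ is equivalent to $3A\ge 2n$.

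The only subtle step is verifying that the hypothesis $A > \lfloor 2n/3\rfloor$ is exactly what is needed to force $3A\ge 2n$. I would dispose of this with a short case analysis on $n \bmod 3$: writing $n=3k,3k+1,3k+2$ shows $\lfloor 2n/3\rfloor = 2k,2k,2k+1$ respectively, so $A \ge \lfloor 2n/3\rfloor + 1$ yields $3A \ge 6k+3, 6k+3, 6k+6$, each strictly greater than $2n$. This bookkeeping is the main, and really the only, obstacle; once it is in place, $l_3 \ge 2n+1-A$ combines with the $l_1,l_2$ bound to give $d(<g,x>,<h,y>) = 2n+1-A$, which matches the upper bound and pins down the eccentricity.
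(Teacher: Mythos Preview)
Your proposal is correct and follows essentially the same approach as the paper: both use Lemma~\ref{gx} for the upper bound and the same parity-split witness $q$ with $|S_6|=n-A$ and $|S_7|,|S_8|$ as balanced as possible, then verify that $l_3$ cannot undercut $\min(l_1,l_2)=2n+1-A$. The only cosmetic difference is that the paper computes $l_3$ exactly (via Corollary~\ref{actual}, obtaining $l_3=2A+2$) whereas you invoke the lower bound $l_3\ge 2A+1$ from Lemma~\ref{sl1}; either suffices, and your explicit $n\bmod 3$ check for $3A\ge 2n$ makes rigorous a step the paper leaves implicit.
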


\begin{proof}Consider a node $p=<g,x>$ with $H(g,x)=A$ and
$A>\lfloor 2n/3\rfloor $. From lemma \ref{gx}, the eccentricity of this node  is
less than or equal to $(2n+1-A)$.

We consider two cases.

{\bf Case 1:} $A$ is even. Let $q=<h,y>$ be another node such that
for the pair $(p,q)$, $|S_6|=n-A$, $|S_7|=A/2$ and $|S_8|=A/2$.
Then by
lemma \ref{sl2}, \ref{sl3}, and corollary \ref{actual}, we have

$l_1=2|S_6|+2|S_7|+1=2(n-A)+2(A/2)+1=2n-A+1$

$l_2=2|S_6|+2|S_8|+2=2(n-A)+2(A/2)+2=2n-A+2$

$l_3=2|S_5|+2|S_7|+2|S_8|+2=2(A/2)+2(A/2)+2=2A+2$

Here, $l_1<l_2$ and as $A>\lfloor 2n/3\rfloor $, $l_1<l_3$. Hence, eccentricity of
node $p$ is $(2n+1-A)$.

{\bf Case 2:} $A$ is odd. Let $q$ be such that $|S_7|=(A+1)/2$, $|S_8|=(A-1)/2$ and
$|S_6|=n-A$. Proceeding as before, $l_1=2n-A+2$, $l_2=2n-A+1$ and $l_3=2A+2$.

Here, $l_2<l_1$ and $l_2<l_3$. Hence, eccentricity of node $p$ is
$(2n+1-A)$.  \end{proof}

\begin{lemma}\label{gx2} In an E-OTIS-$Q_n$, the eccentricity of a node $<g,x>$
is at most   $n+\lfloor {H(g,x)+3\over 2}\rfloor$.
\end{lemma}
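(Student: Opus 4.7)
The plan is to bound the distance from $<g,x>$ to an arbitrary destination $<h,y>$ by $\min(l_1, l_2, l_3)$, where $l_1, l_2, l_3$ are the three candidate path lengths introduced at the beginning of Section 4, and to show that this minimum never exceeds $n + \lfloor (A+3)/2 \rfloor$ with $A = H(g,x)$. Lemmas \ref{sl1}, \ref{sl2}, and \ref{sl3} are the arithmetic backbone, together with the combinatorial identity $|S_5|+|S_6|+|S_7|+|S_8| = n-T$. Degenerate destinations with $g=h$ (distance $\le n$) or $h=x$ (giving $l_1 \le n+1$) are disposed of immediately, so I focus on generic $<h,y>$.

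Writing $l_3 = T + 2|S_5| + 2|S_7| + 2|S_8| + k$ with $k \in \{1,2,3\}$ the number of optical links in the RTE path (permitted by Lemma \ref{sl1}), substitution of the identity into $l_1+l_3$ and $l_2+l_3$ telescopes the $T$, $|S_5|$, $|S_6|$ contributions and leaves
\[l_1+l_3 \le 2n+2|S_7|+k+1, \qquad l_2+l_3 \le 2n+2|S_8|+k+2.\]
Since path lengths are positive integers, floors of half-sums then yield $\min(l_1,l_3) \le n+|S_7|+\lfloor (k+1)/2 \rfloor$ and $\min(l_2,l_3) \le n+|S_8|+\lfloor (k+2)/2 \rfloor$.

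The argument then splits on $k$. Inspection of algorithm RTE shows that $k=3$ occurs exactly when $|S_1|\ge 1$ and $|S_3|\ge 1$, and $k\le 2$ in every other configuration. In the $k\le 2$ branch the two bounds combine into $\min(l_1,l_2,l_3) \le n + \min(|S_7|+1,\,|S_8|+2)$; a short subcase on which of $|S_7|,|S_8|$ dominates, combined with $|S_7|+|S_8|\le A$ and integrality, closes it. In the $k=3$ branch, the assumption $|S_1|\ge 1$ sharpens $|S_7|+|S_8|\le A$ to $|S_7|+|S_8|\le A-1$, so $\min(|S_7|,|S_8|)\le\lfloor (A-1)/2\rfloor$; together with the elementary identity $\lfloor (A-1)/2\rfloor+2 = \lfloor (A+3)/2\rfloor$, valid for every $A\ge 0$, this closes the case.

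The hard part will be the even-$A$ situation, where a naive averaging returns $n+A/2+2$ rather than the required $n+A/2+1$. Exactly one unit must be saved, and it is recovered in two complementary ways depending on $k$: for $k\le 2$, integrality of $|S_7|$ in the dominating subcase forces $|S_7|\le \lfloor (A+1)/2\rfloor = A/2$; for $k=3$, the assumption $|S_1|\ge 1$ cuts the $|S_7|+|S_8|$ budget by exactly one, which is precisely what is needed to absorb the extra optical link. No other step in the argument presents essential difficulty.
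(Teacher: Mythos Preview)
Your proposal is correct and follows essentially the same averaging strategy as the paper: bound $\min(l_1,l_2,l_3)$ by halving $l_1+l_3$ or $l_2+l_3$, collapse via $T+|S_5|+|S_6|+|S_7|+|S_8|=n$, and recover the stray unit by the $|S_1|=0$ versus $|S_1|\ge 1$ dichotomy. The only difference is organizational: the paper case-splits first on $|S_7|\le|S_8|$ versus $|S_7|>|S_8|$ and then on whether $|S_1|=0$ (invoking Corollary~\ref{actual} to drop one optical link), whereas you split first on the optical-link count $k$ of the RTE path (noting $k=3$ forces $|S_1|\ge 1$) and then on which of $|S_7|,|S_8|$ is smaller.
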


\begin{proof}Consider a node $p=<g,x>$ with $H(g,x)=A$ and $A<2n/3$.
Consider another node $q=<h,y>$.
Then by
lemma \ref {sl2}, \ref{sl3}, and \ref{sl1}, we have,

$l_3 \le T+2|S_5|+2|S_7|+2|S_8|+3$

$l_1 = T + 2|S_6|+2 |S_7| +1$

$l_2 = T + 2|S_6|+ 2|S_8|+2$

We consider two cases.

{\bf Case 1:} $|S_7|\le |S_8|$. Here $l_1<l_2$. Hence, path between $p$ and
$q$ is of length $l \le (l_1+l_3)/2$.

$l_1+l_3= 2(T+|S_5|+|S_6|+|S_7|+|S_8|)+2|S_7|+4$\\
$\Rightarrow l_1+l_3 \le 2n+|S_7|+|S_8|+4$

If $|S_1|=0$, then $l_3$ is reduced by 1 (corollary \ref{actual}) and $l_1+l_3 \le 2n+|S_7|+|S_8|+3$.
As $|S_7|+|S_8|\le A$, $l_1+l_3\le 2n+A+3$

If $|S_1|>0$, $|S_7|+|S_8|\le A-1$ and $l_1+l_3\le 2n+A+3$

{\bf Case 2:} $|S_7|>|S_8|$. Here, $l_2<l_1$.
Path between $p$ and $q$ is of length $l\le (l_2+l_3)/2$.

$l_2+l_3=2n+2|S_8|+5 \le 2n+|S_8|+(|S_7|-1)+5=2n+|S_8|+|S_7|+4$.
Proceeding as in Case 1, we can show that $l_2+l_3\le 2n+A+3$.

Combining cases 1 and 2, we conclude that eccentricity of $p$ is
at most $n+\lfloor {A+3 \over 2}\rfloor$.

\end{proof}

\begin{lemma}\label{ec2} In an E-OTIS-$Q_n$, for $H(g,x)\le \lfloor 2n/3\rfloor$, the eccentricity of a node $<g,x>$
is equal to $n+\lfloor {H(g,x)+3\over 2}\rfloor$.
\end{lemma}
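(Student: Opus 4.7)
The plan is to supply the matching lower bound. Lemma \ref{gx2} already gives eccentricity at most $n+\lfloor(A+3)/2\rfloor$ for $A=H(g,x)$, so I only need to exhibit a node $<h,y>$ at distance exactly $n+\lfloor(A+3)/2\rfloor$ from $<g,x>$. Following the witness-construction strategy of Lemma \ref{eco}, I will prescribe $<h,y>$ by choosing the sizes $|S_1|,\ldots,|S_8|$ of the eight bit-classes defined at the start of this section, and then read off $l_1,l_2,l_3$ from Lemmas \ref{sl1}--\ref{sl3} together with Corollary \ref{actual}. Any non-negative integer assignment obeying $\sum_i|S_i|=n$ and $|S_1|+|S_2|+|S_7|+|S_8|=A$ is realised by some $(h,y)$ (fill the bit positions in consistently with the case definitions), so the problem reduces to finding a profile whose minimum path length hits the target.

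For $A$ even with $A>0$ I would take $|S_1|=|S_2|=|S_4|=|S_5|=0$, $|S_6|=|S_7|=|S_8|=A/2$, and $|S_3|=n-3A/2$; the assumption $A\le\lfloor 2n/3\rfloor$ keeps $|S_3|$ non-negative. Since $|S_1|=0$ and $|S_7|>0$, Corollary \ref{actual} pins down $l_3$, and the arithmetic should give $l_1=n+A/2+1$ and $l_2=l_3=n+A/2+2$, whence the minimum equals $n+\lfloor(A+3)/2\rfloor$.

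For $A$ odd I would take $|S_1|=|S_2|=|S_4|=|S_5|=0$, $|S_6|=|S_7|=(A+1)/2$, $|S_8|=(A-1)/2$, and $|S_3|=n-(3A+1)/2$. Again $A\le\lfloor 2n/3\rfloor$ keeps $|S_3|\ge 0$, the tight instance being $n\equiv 2\pmod 3$ and $A=2\lfloor n/3\rfloor+1$ (where $|S_3|=0$). The computation should produce $l_2=l_3=n+(A+3)/2$ with $l_1=n+(A+5)/2$, again matching the target. The residual case $A=0$ I would dispatch separately by the witness $<h,y>=<\bar g,g>$ (equivalently $|S_4|=n$, all other $|S_i|=0$), for which a direct calculation yields $l_1=l_3=n+1$.

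The hard part is recognising which profile to pick. The proof of Lemma \ref{gx2} bounds $l_1+l_3$ (and symmetrically $l_2+l_3$) by $2n+A+3$, so any witness achieving $\min\ge L:=n+\lfloor(A+3)/2\rfloor$ must essentially saturate that inequality while balancing $l_1$ (or $l_2$) against $l_3$ modulo the parity of $A$. The profiles above are engineered to do exactly this, and at the same time to ensure $h\ne g$ (so the three-path framework applies) and that Corollary \ref{actual}'s hypothesis holds (so the asserted value of $l_3$ is the actual shortest length): $|S_6|+|S_8|\ge 1$ secures the former whenever $A>0$, and $|S_1|=0$ together with $|S_7|>0$ secures the latter. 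With those checks in place, the minimum path length is exactly $L$, which together with Lemma \ref{gx2} gives the claimed equality.
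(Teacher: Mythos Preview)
Your proposal is correct and follows the same overall architecture as the paper: invoke Lemma~\ref{gx2} for the upper bound, then exhibit a witness $\langle h,y\rangle$ by specifying the profile $(|S_1|,\ldots,|S_8|)$ and reading off $l_1,l_2,l_3$ from Lemmas~\ref{sl2}, \ref{sl3} and Corollary~\ref{actual}.

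The difference is in the choice of witness. The paper distributes the $n-A$ ``free'' bit positions between $S_5$ and $S_6$, which forces a further split according to $(2n-A)\bmod 4$ in order to keep both sizes integral; this yields four sub-cases (1a, 1b, 2a, 2b). You instead dump the surplus into $S_3$ (with $|S_5|=0$ throughout), so a single profile per parity of $A$ suffices. The price you pay is an extra sentence for $A=0$: your even-$A$ profile would give $|S_6|=|S_8|=0$ and hence $g=h$, collapsing the three-path framework, so you correctly peel that case off with the explicit witness $\langle\bar g,g\rangle$. The paper's construction does not need this exception because it always has $|S_6|>0$. Net effect: your argument has three cases against the paper's four and avoids the modular arithmetic, at the cost of one ad hoc endpoint; both reach the same target value $n+\lfloor(A+3)/2\rfloor$ for $\min(l_1,l_2,l_3)$.
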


\begin{proof}Consider a node $p=<g,x>$ with $H(g,x)=A$ and
$A\le 2n/3$. From lemma \ref{gx2}, the eccentricity of this node  is
less than or equal to $n+\lfloor{A+3 \over 2}\rfloor$.

To show that eccentricity of node $p$ is equal to $n+\lfloor{A+3\over 2}
\rfloor$ we consider the following cases.

{\bf Case 1:} $A$ is even : Consider two sub cases

{\bf Case 1a:} $(2n-A) = 0 \bmod 4$. Let us take $q$ such that
             $|S_7|=A/2$, $|S_8|=A/2$, $|S_5|=(2n-3A)/4$, and $|S_6|=(2n-A)/4$.

Then by
corollary \ref{actual} and lemma \ref {sl2}, \ref{sl3}, we have,
$l_3=2(2n-3A)/4+A+A+2=n+A/2+2$\\
$l_1=2(2n-A)/4+A+1=n+A/2+1$ and $l_2>l_1$.

Hence minimum of $l_1$, $l_2$ and $l_3$ is $n+(A+2)/2$.
For even $A$, $n+(A+2)/2=n+\lfloor{A+3\over 2}\rfloor$

{\bf Case 1b:} $(2n-A)=2 \bmod 4$. Let
             $|S_7|=A/2$, $|S_8|=A/2$, $|S_5|=\lfloor {2n-3A\over 4}
\rfloor$ = $(2n-3A-2)/4$, and $|S_6|=(2n-A+2)/4$.

Then by
corollary \ref{actual} and lemma \ref {sl2}, \ref{sl3}, we have,
$l_3=2(2n-3A-2)/4+A+A+2=n+A/2+1$\\
$l_1=2(2n-A+2)/4+A+1=n+A/2+2$ and $l_2>l_1$.

Hence minimum of $l_1$, $l_2$ and $l_3$ is $n+(A+2)/2$.
For even $A$, $n+(A+2)/2=n+\lfloor{A+3\over 2}\rfloor$

{\bf Case 2:} $A$ is odd.

{\bf Case 2a:} $(2n-A)=1 \bmod 4$. Let
            $S_3=1$,  $|S_7|=(A+1)/2$, $|S_8|=(A-1)/2$, $|S_5|=\lfloor {2n-3A\over 4}
\rfloor$ = $(2n-3A-3)/4$, and  $|S_6|=(2n-A-1)/4$.

Then by corollary \ref{actual} and
lemma \ref {sl2}, \ref{sl3}, we have,
$l_3=1+ 2(2n-3A-3)/4+(A-1)+(A+1)+2=n+(A+3)/2$\\
$l_2= 1+ 2(2n-A-1)/4+2(A-1)/2+2=n+(A+3)/2$ and $l_2<l_1$.

{\bf Case 2b:} $(2n-A)=3 \bmod 4$. Let
             $|S_7|=(A+1)/2$, $|S_8|=(A-1)/2$, $|S_5|=\lfloor {2n-3A\over 4}
\rfloor$ = $(2n-3A-1)/4$, and  $|S_6|=(2n-A+1)/4$.

Then by corollary \ref{actual} and
lemma \ref {sl2}, \ref{sl3}, we have,
$l_3= 2(2n-3A-1)/4+(A-1)+(A+1)+2=n+(A+3)/2$\\
$l_2= 2(2n-A+1)/4+2(A-1)/2+2=n+(A+3)/2$ and $l_2<l_1$.

Hence, combining Cases 1 and 2, we prove that eccentricity of node
$<g,x>$ is $n + \lfloor {H(g,x)+3 \over 2}\rfloor$ for $H(g,x) \le \lfloor
2n/3 \rfloor$.
\end{proof}

Table 1 compares the eccentricities of an OTIS-$Q_8$ and an E-OTIS-$Q_8$ for
different values of $H(g,x)$.\\
{\bf Remark:} The variation in eccentricity of the nodes is less in E-OTIS-$Q_n$ compared to OTIS-$Q_n$.
For $H(g,x)$ close to $n$, the eccentricity values are equal, and for $H(g,x)$ close to zero, the eccentricity
of the nodes in E-OTIS-$Q_n$ is  almost half of that in OTIS-$Q_n$.

\vspace*{13pt}
\centerline{\footnotesize Table~1.  Eccentricity in OTIS-$Q_8$ and E-OTIS-$Q_8$}
\vspace*{-2pt}
\noindent
\begin{center}
{\footnotesize
\begin{tabular}{|c|c|c|}
\hline
$H(g,x)$  &eccentricity in OTIS &eccentricity in E-OTIS\\
\hline
0  &17 &9\\
1  &16 &10\\
2  &15 &10\\
3  &14 &11\\
4  &13 &11\\
5  &12 &12\\
6  &11 &11\\
7  &10 &10\\
8  &9  &9\\
\hline
\end{tabular}
}\end{center}

\subsection{Average eccentricity of OTIS-$Q_n$ and E-OTIS-$Q_n$}

The following lemma establishes average eccentricity of OTIS-$Q_n$.
\begin{lemma}
Average eccentricty of OTIS-$Q_n$ is $(3n+2)/2$
\end{lemma}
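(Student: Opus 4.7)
The plan is to apply Lemma~\ref{eco}, which already pins down the eccentricity of every node $\langle g,x\rangle$ in OTIS-$Q_n$ as $2n+1-H(g,x)$. With this, the problem reduces to computing the average value of $H(g,x)$ over all $2^{2n}$ ordered pairs $(g,x)\in Q_n\times Q_n$, since
\[
\text{avg ecc} \;=\; \frac{1}{2^{2n}}\sum_{(g,x)} \bigl(2n+1-H(g,x)\bigr) \;=\; (2n+1) - \frac{1}{2^{2n}}\sum_{(g,x)} H(g,x).
\]

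The next step is to evaluate $\sum_{(g,x)} H(g,x)$ by bitwise linearity. Writing $H(g,x)=\sum_{i=1}^n [g[i]\ne x[i]]$ and swapping the order of summation, each bit position $i$ contributes the number of pairs in which $g[i]\ne x[i]$. By a standard counting argument, exactly half of all $2^{2n}$ pairs differ in any fixed coordinate, giving $2^{2n-1}$ contributing pairs per bit and hence $\sum_{(g,x)}H(g,x)=n\cdot 2^{2n-1}$.

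Substituting back yields
\[
\text{avg ecc} \;=\; (2n+1) - \frac{n\cdot 2^{2n-1}}{2^{2n}} \;=\; (2n+1) - \frac{n}{2} \;=\; \frac{3n+2}{2},
\]
which is the claimed value.

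There is essentially no obstacle here: the only nontrivial ingredient is Lemma~\ref{eco}, which was already proved. The bit-position counting is completely routine and symmetric, so I would expect the entire proof to amount to a couple of lines once Lemma~\ref{eco} is invoked.
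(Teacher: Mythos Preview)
Your proof is correct and follows essentially the same approach as the paper: both invoke Lemma~\ref{eco} to reduce the average eccentricity to $(2n+1)$ minus the average Hamming distance, and then compute that average to be $n/2$. The only cosmetic difference is that the paper restricts to a single group and evaluates $\sum_k \binom{n}{k}(2n+1-k)$ via the identity $\sum_k k\binom{n}{k}=n\,2^{n-1}$, whereas you sum over all $2^{2n}$ pairs and use bitwise symmetry; both computations are equivalent and equally short.
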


\begin{proof}

The eccentricity of a node $<g,x>$ depends only on $H(g,x)$ irrespective of the
value of $g$. So, to find the average we can consider only the nodes
of one particular group. For a given $g$,
the number of nodes $<g,x>$ with $H(g,x)=k$ is ${n \choose k}$ and
they are of eccentricity $2n+1-k$.

Hence sum of the eccentricities of all the nodes in a group is given by
\begin{equation}
\begin{array}{lll}
\sum_{k=0}^n ({n \choose k}(2n+1-k)
&= (2n+1)2^n -n \sum_{k=1}^n {n-1 \choose k-1}\\
&=(2n+1)2^n-n2^{n-1}=(3n+2)2^{n-1}\\
\end{array}
\end{equation}

Dividing the sum by total number of nodes in a group i.e,  $2^n$, we get the
average eccentricity  as $(3n+2)/2$.

\end{proof}

Similarly to find the average eccentricity of E-OTIS-$Q_n$, we
note that for $H(g,x)<\lfloor 2n/3\rfloor$, eccentricity of $<g,x>$ is
$n+ \lfloor {H(g,x)+3 \over 2}\rfloor$ and for $H(g,x) >2n/3$, the
eccentricity is $2n+1-H(g,x)$.

Hence average eccentricity is equal to
\begin{equation}
{1 \over 2^n}
(\sum_{k=0}^{\lfloor 2n/3\rfloor }  {n\choose k}
(n+ \lfloor {k+3 \over 2}\rfloor) +
\sum_{k= \lfloor 2n/3\rfloor +1}^n
  {n \choose k} (2n+1-k))
\end{equation}

It is not possible to give a simplified expression for eccentricity in this
case.
We evaluated eccentricity of E-OTIS-$Q_n$ for different values of $n$
and put them
on a table below. We also put the value $(3n+2)/2$ obtained as average
eccentricity of OTIS-$Q_n$.

\subsection{Diameter of E-OTIS-$Q_n$}
We can now use the eccentricity values to find the diameter of E-OTIS-$Q_n$.
\begin{theorem} Diameter of E-OTIS-$Q_n$ is equal to $\lfloor {4n+4\over 3}
\rfloor$.
\end{theorem}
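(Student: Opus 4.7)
The plan is to compute the diameter as the maximum eccentricity over all nodes, using the two closed-form expressions for eccentricity established in Lemma \ref{ec1} and Lemma \ref{ec2}. Since every node's eccentricity depends only on $A = H(g,x)$, the diameter equals $\max_{0 \le A \le n} \mathrm{ecc}(A)$, where
\[
\mathrm{ecc}(A) \;=\; \begin{cases} n + \lfloor (A+3)/2 \rfloor & \text{if } A \le \lfloor 2n/3 \rfloor, \\ 2n+1-A & \text{if } A > \lfloor 2n/3 \rfloor. \end{cases}
\]
The first branch is nondecreasing in $A$, so its maximum occurs at $A = \lfloor 2n/3 \rfloor$; the second branch is strictly decreasing in $A$, so its maximum occurs at $A = \lfloor 2n/3 \rfloor + 1$. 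Therefore the diameter reduces to the larger of the two quantities $n + \lfloor (\lfloor 2n/3 \rfloor + 3)/2 \rfloor$ and $2n - \lfloor 2n/3 \rfloor$.

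Next I would verify that this maximum equals $\lfloor (4n+4)/3 \rfloor$ by a case split on $n \bmod 3$. Writing $n = 3m$, $n = 3m+1$, or $n = 3m+2$ makes both $\lfloor 2n/3 \rfloor$ and all floor operations explicit, so the two candidate maxima become simple linear expressions in $m$ which can be compared to $\lfloor (4n+4)/3 \rfloor = \lfloor (12m+c)/3 \rfloor$ directly. In each residue class one checks that the winner among the two candidates is exactly $\lfloor (4n+4)/3 \rfloor$: for $n \equiv 0 \pmod 3$ and $n \equiv 1 \pmod 3$ the winning branch is the first (eccentricity of a node with $H(g,x) = \lfloor 2n/3 \rfloor$), while for $n \equiv 2 \pmod 3$ the first branch still dominates by exactly one.

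Finally, to close the argument I would note that the upper bound side follows immediately from Lemmas \ref{gx} and \ref{gx2}, which together bound every node's eccentricity by the piecewise formula above, while the lower bound side is witnessed by the explicit choices of $q = \langle h, y \rangle$ constructed in the proof of Lemma \ref{ec2} (for $A = \lfloor 2n/3 \rfloor$) or Lemma \ref{ec1} (for $A = \lfloor 2n/3 \rfloor + 1$). Thus no new construction is needed; the diameter is realized by the pair $(\langle g,x\rangle, \langle h,y\rangle)$ already exhibited in those lemmas for the appropriate value of $A$. The main obstacle is purely bookkeeping: keeping track of floors carefully across the three residue classes of $n \bmod 3$ and confirming that the maximum over the two branches coincides with $\lfloor (4n+4)/3 \rfloor$ in every case, including the boundary where the first branch's value and the second branch's value differ by at most one.
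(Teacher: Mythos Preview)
Your proposal is correct and follows essentially the same approach as the paper: both arguments compute the diameter as the maximum eccentricity using Lemmas~\ref{ec1} and~\ref{ec2}, observe that the two piecewise branches are monotone in opposite directions, and identify the maximum at the boundary $A=\lfloor 2n/3\rfloor$. The paper's proof is terser---it simply asserts that the eccentricity on the range $0\le A\le\lfloor 2n/3\rfloor$ runs from $n+1$ up to $\lfloor(4n+4)/3\rfloor$ and that this dominates the other branch---whereas you propose to verify the floor arithmetic explicitly via a case split on $n\bmod 3$, which is a reasonable way to make that step airtight.
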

\begin{proof}Given a node $<g,x>$, in E-OTIS-$Q_n$, its eccentricity is $2n+1-H(g,x)$ for $H(g,x)>\lfloor {2n\over 3}\rfloor$ by
lemma \ref{ec1}. So, for
$H(g,x)$ ranging from $n$ to $\lfloor {2n/3 }\rfloor +1$ eccentricity ranges from $n+1$ to $2n-\lfloor 2n/3 \rfloor$.
Again, by lemma \ref{ec2} for $H(g,x)\le \lfloor {2n\over 3}\rfloor$, the eccentricity of a node $<g,x>$ is equal to
 $n + \lfloor {H(g,x)+3 \over 2}\rfloor$. For $H(g,x)$ ranging from 0 to $\lfloor{2n\over 3}\rfloor$ the eccentricity
of a node $<g,x>$ ranges from $n+1$ to $\lfloor {4n+4\over 3}\rfloor$. Hence, the proof. \end{proof}

The smaller diameter and eccentricities can have important application in implementing parallel algorithm
on E-OTIS-$Q_n$. Since E-OTIS-$Q_n$ has OTIS-$Q_n$ as subgraph, all the algorithms for OTIS-$Q_n$ can be mapped
to E-OTIS-$Q_n$ without change. In those algorithm whenever there is a need for routing between nodes, the
shortest path routing algorithm developed here can be applied and if there is a need for single node broadcast,
the node having lower eccentricity can be chosen as the originator for broadcast
and thus reducing the time
for broadcast.

\vspace*{13pt}
\centerline{\footnotesize Table~2. Average Eccentricity}
\vspace*{-2pt}
\noindent
\begin{center}
{\footnotesize
\begin{tabular}{|c|c|c|}
\hline
$n$ &Average Eccentricity of  &Average Eccentricity of\\
   &OTIS-$Q_n$ &E-OTIS-$Q_n$\\
\hline
4 &7.0 		&5.875\\
5 &8.5 		&7.250\\
6 &10.0           &8.516\\
7 &11.5		&9.695\\
8 &13.0           &11.031\\
9 &14.5         &12.297\\
10 &16.0          &13.501\\
\hline
\end{tabular}
}\end{center}

\section{Conclusion }

For graphs which are node symmetric the eccentricity of all the nodes
are same and equal to its diameter. But for graphs which are not node-symmetric,
the eccentricity of the nodes can vary. We observer that
minimum eccentricity of OTIS-$Q_n$ and E-OTIS-$Q_n$ is
almost half of the diameter of OTIS-$Q_n$s.
In this work, we make a classification
of the nodes in OTIS-$Q_n$ and E-OTIS-$Q_n$
based on their position within a group. It is
shown that
 the nodes belonging to
the same class have the same eccentricity.
Specifically, we have shown that the eccentricity of a node $<g,x>$, in
an  OTIS-$Q_n$ is $2n+1-H(g,x)$. For E-OTIS-$Q_n$, the eccentricity of
a node $<g,x>$ is equal to $n+\lfloor {H(g,x)+3\over 2}\rfloor$, when
$H(g,x)\le \lfloor 2n/3\rfloor$, and $2n+1-H(g,x)$, when
$H(g,x) >\lfloor 2n/3 \rfloor$.
Similar classification can be attempted on other OTIS networks.

\end{document}